\documentclass[a4paper]{article}

\usepackage[utf8]{inputenc}

\usepackage[dvipdfmx]{graphicx}
\usepackage{bmpsize}
\usepackage{multicol}
\usepackage{amsthm}
\usepackage{bm}
\usepackage{amssymb}
\usepackage{amsmath}
\usepackage{color}
\usepackage{amsthm}
\usepackage{subcaption} 
\usepackage{graphicx}
\usepackage{dcolumn}
\usepackage{bm}

\usepackage[utf8]{inputenc}
\usepackage[T1]{fontenc}
\usepackage{etoolbox}
\usepackage{ascmac,wrapfig,makeidx}
\usepackage{physics}
\usepackage[stable]{footmisc}
\usepackage{mathrsfs}
\usepackage{color}
\usepackage{comment}
\usepackage{amsmath}
\usepackage{physics}

\theoremstyle{plain}
\newtheorem{theorem}{Theorem}
\newtheorem*{theorem*}{Theorem}
\newtheorem{lemma}{Lemma}[section]

\newtheorem{proposition}{Proposition}[section]

\theoremstyle{definition}

\newcommand{\BA}{\begin{eqnarray}}
\newcommand{\EA}{\end{eqnarray}}

\definecolor{dgreen}{rgb}{0.0, 0.5, 0.0}

\begin{document}

\fontsize{14pt}{16.5pt}\selectfont

\begin{center}
\bf{Rigged Liouville space formulation for\\
quasi-Hermitian Liouville operators
}
\end{center}
\fontsize{12pt}{11pt}\selectfont
\begin{center}
Shousuke Ohmori$^{1,*)}$ and Junichi Takahashi$^{2)}$ 
\end{center}

\bigskip

\noindent
\it{
1)Department of Economics, Hosei University, Machida-shi, Tokyo 194-0298, Japan.
}\\
\it{
2)~Department of Economics, Asia University, 
Musashino-shi, Tokyo 180-0022, Japan.}

\bigskip

\noindent
*corresponding author: 42261timemachine@ruri.waseda.jp\\
~~\\
\rm
\fontsize{11pt}{14pt}\selectfont\noindent

\baselineskip 22pt

{\bf Abstract}\\
%
We discuss a super bra-ket formalism for quasi-Hermitian Liouvillian operators within the framework of rigged Hilbert spaces (RHS). 
An RHS in terms of the Liouville space, referred to as a rigged Liouville space (RLS), is reconstructed by exploiting the mathematical fact that the space of Hilbert-Schmidt operators is unitarily equivalent to the tensor product of Hilbert spaces.
The obtained RLS endows a rigorous foundation of the construction for the super bra-ket and for the spectral decompositions of both Hermitian and quasi-Hermitian Liouville operators, which are characterized by the generalized eigenvectors in the dual spaces.
Furthermore, within this framework, the non-Hermitian Liouvillian operator and its adjoint can be constructed symmetrically, with their symmetric structure preserved.
As an application of our RLS methodology, we examine the Liouville operators corresponding to Hermitian and non-Hermitian harmonic oscillators and elucidate the differences between their spectral decomposition forms.


\section{Introduction}
\label{sec:1}

Rigged Hilbert space (RHS) formalism has been developed to address the issue of constructing the mathematical foundation of quantum mechanics described based on Dirac's bra-ket notation\cite{Robert1966a,Robert1966b,Antoine1969a,Antoine1969b,Melsheimer1974a,Melsheimer1974b,Bohm1978,Bohm1981,Gadella1983,Gadella1984,Bohm1989,Prigogine1996,Bohm1998,Antoiou1998a,Antoiou1998b,Antoiou2003,Gadella2003,Chruscinski2003,Chruscinski2004,Madrid2004,Madrid2005,Antoine2009,Liu2013,Celeghini,Antoine2021,Fernandez2022,Ohmori2022,Ohmori2024}.
The key of this formalism is the triplet of the topological vector spaces constructed by combining a Hilbert space $\mathcal{H}=(\mathcal{H}, \langle \cdot, \cdot \rangle_\mathcal{H})$ with its dense linear subspace $\Phi$ and its dual space $\Phi^\prime$, 
\begin{eqnarray}
    \Phi\subset \mathcal{H} \subset\Phi^{\prime},
    \label{eqn:RHS0}
\end{eqnarray}
where the subspace $\Phi$ has a nuclear topology, namely, $\Phi=(\Phi, \tau_\Phi)$ is a nuclear space, such that the inner product $\langle \cdot, \cdot \rangle_\Phi$ on $\Phi$, $\langle \phi, \psi \rangle_\Phi \equiv \langle \phi, \psi \rangle_\mathcal{H}$ for $\phi, \psi \in \Phi$, is separately continuous on $(\Phi, \tau_\Phi)$.
$\Phi^\prime$ is a family of continuous linear functionals on $(\Phi,\tau_\Phi)$.
The RHS (\ref{eqn:RHS0}) enables the mathematical treatment of not only the bra and ket vectors but also spectral decomposition characterized by distributions, such as the Dirac's $\delta$ function, appearing in Dirac's bra-ket formalism, which are insufficient to handle within the traditional Hilbert-space framework of von Neumann.
Actually, for a given Hermitian operator (observable), the nuclear spectral theorem\cite{Maurin1968} based on the RHS provides eigenequations for the bra and ket vectors, individually, and the spectral expansion for the Hermitian operator that is identified with the spectral decomposition by the discrete and continuous spectra described using
the Dirac's $\delta$-function found in the literature.
Nowadays, the methodology of RHS has been widely applied to several physical systems beyond the scope of traditional Hermitian quantum mechanics, such as non-equilibrium open systems\cite{Chruscinski2003,Chruscinski2004}, non-Hermitian quantum systems\cite{Ohmori2022,Ohmori2024}, and nonlinear dynamical systems\cite{Antoniou1993,Suchanecki1996,Chiba2015}.

Considering the symmetric structure of the bra and ket vectors, the approach using the following type of RHS, proposed by R. Madrid\cite{Madrid2005}, has also been developed.
\begin{eqnarray}
    \Phi\subset \mathcal{H} \subset\Phi^{\prime}, \Phi^\times.
    \label{eqn:RHS}
\end{eqnarray}
In this formulation, the space $\Phi^\times$ that is a family of continuous {\it anti-linear} functionals on $(\Phi,\tau_\Phi)$ is added to the RHS (\ref{eqn:RHS0})
(the function $f\in \Phi^{\times}$ is called anti-linear when satisfying $f(a\varphi+b\phi)=a^*f(\varphi)+b^*f(\phi)$, where $a$ and $b$ are complex numbers with complex conjugates $a^*$ and $b^*$ and $\varphi, \phi \in \Phi$).
Then, the bra and ket vectors are characterized by elements of $\Phi^\prime$ and $\Phi^\times$;
for $\varphi \in \Phi$, 
a map $\ket{\varphi}_\mathcal{H} : \Phi \rightarrow \mathbb{C}^1$, $\ket{\varphi}_\mathcal{H}(\phi) \equiv  \langle \phi,\, \varphi\rangle_\mathcal{H}$ for $\phi \in \Phi$ is called a ket vector of $\varphi$.
The bra vector of $\varphi$ is given as the complex conjugate of $\ket{\varphi}_\mathcal{H}$, 
namely, the map $\bra{\varphi}_\mathcal{H} : \Phi \rightarrow \mathbb{C}^1 $ where $\bra{\varphi}_{\mathcal{H}}(\phi)= \ket{\varphi}_{\mathcal{H}}^*(\phi)=(\ket{\varphi}_{\mathcal{H}}(\phi))^*=\langle \varphi,\, \phi\rangle_\mathcal{H}$.
%
%
Note that $\bra{\varphi}_{\mathcal{H}}$ and $\ket{\varphi}_{\mathcal{H}}$ belong to $\Phi^{\prime}$ and $\Phi^{\times}$, respectively.
The eigenequations and the spectral expansions are then formulated in the dual and anti-dual spaces $\Phi^{\prime}$ and $\Phi^{\times}$ by generalized eigenvectors corresponding to given Hermitian operator.
In this respect, the handling of bra and ket vectors as elements of these dual spaces constitutes one of the crucial features of the RHS formalism.
Hereafter, we refer to the combination of the dual space $\Phi^{\prime}$ and the anti-dual space $\Phi^{\times}$ as the dual spaces.

Based on this viewpoint of the RHS formalism, we have studied the mathematical description of the bra-ket formalism found in non-Hermitian quantum systems whose non-Hermitian operator ${A}$ is characterized by the  symmetric relation,
\begin{equation}
{A}^\dagger =\eta {A} \eta^{-1},
    \label{eqn:O1}
\end{equation}
where $\eta$ is the intertwining operator\cite{Ohmori2022,Ohmori2024}.
This type of operator is referred to as an $\eta$-quasi Hermitian\cite{Dieudonne1961,Mos2010,Antoine2013}
and found in several non-Hermitian quantum systems, such as 
$\mathcal{PT}$-symmetrical systems\cite{Bender1998,Bender2007}.
Focusing on the case where the intertwining operator $\eta$ is a positive operator, we presented a general framework for the bra-ket description using the RHS given as
\begin{equation}
    \Phi \subset \widetilde{\mathcal{H}_\eta} \subset \Phi^\prime,\Phi^\times,
    \label{eqn:RHS_eta}
\end{equation}
where $\widetilde{\mathcal{H}_\eta}=(\widetilde{\mathcal{H}_\eta},\widetilde{\langle \cdot, \cdot \rangle_\eta})$ is the completion of the pre-Hilbert space $\mathcal{H}_\eta=(\mathcal{H},\langle \cdot, \cdot \rangle_\eta)$ equipping the metric induced from the inner product $\langle \phi, \psi \rangle_\eta
    = \langle \phi, \eta \psi \rangle_\mathcal{H}$ for $\phi,\psi \in \mathcal{H}$ with respect to $\eta$\cite{Ohmori2022}.
This RHS (\ref{eqn:RHS_eta}) suitable for the $\eta$-quasi Hermitian is called an $\eta$-RHS.

In the $\eta$-RHS, an $\eta$-quasi Hermitian operator becomes Hermitian, and the bra-ket formalism can be formulated in the dual spaces. 
Furthermore, in the analysis of non-Hermitian composite systems, this framework ensures that the composite operator and its adjoint are well defined in a symmetric manner, while preserving their symmetric structure (\ref{eqn:O1})\cite{Ohmori2024}.
To be more specific, the following issue arises. 
For non-Hermitian (not self-adjoint) operators $A_i$ ($i = 1, 2$) acting on Hilbert spaces, the relation
$
(A_1^\dagger \otimes I_2 + I_1 \otimes A_2^\dagger) \neq
(A_1 \otimes I_2 + I_1 \otimes A_2)^\dagger
$
holds in general; only the inclusion
$(A_1^\dagger \otimes I_2 + I_1 \otimes A_2^\dagger) \subset
(A_1 \otimes I_2 + I_1 \otimes A_2)^\dagger
$
is satisfied.\cite{Simon1980}
This implies that for the composite operator $A=\overline{A_1\otimes I+I\otimes A_2}$, one generally has
$A^\dagger \neq \overline{A_1^\dagger \otimes I + I \otimes A_2^\dagger}$,
leading to an inconsistency in defining the adjoint operator of the composite system. 
However, this inconsistency can be resolved by extending these operators to the dual spaces within the RHS framework.
Therefore, the RHS formulation offers a particular advantage in addressing the problem of defining adjoint operators for non-Hermitian composite systems.
A similar issue arises in the treatment of non-Hermitian Liouvillian operators, as will be discussed in Sec.~\ref{sec:4.3}.

The present study aims to apply our previous framework to a non-Hermitian Liouville operator acting on Liouville space.
There have been several previous studies applying RHS formulation to Liouville space so far\cite{Antoiou1998a,Antoiou1998b,Antoiou2003,Liu2013}.
In these studies, the concept of an RHS suitable for Liouville space, called rigged Liouville space (RLS), was introduced and the general properties of the Liouville operator, including its spectral decomposition were discussed.
%
%
(The precise definition of RLS is given in the present study).
The previous RLS treatments mentioned above have been primarily limited to the
case where the Liouville operator is Hermitian (self-adjoint). 
However, in recent studies of non-equilibrium open systems and non-Hermitian quantum systems,
the role of the non-Hermitian Liouville operator has become increasingly
important\cite{Gyamfi2020, Polonyi2021, Sukharnikov2023}. 
For instance, in the Lindblad master equation that is well-known as the representative equation governing open quantum systems, the generator (Lindbladian) acts as a non-Hermitian superoperator on the Liouville space, and its spectral properties play a central role in understanding the relaxation dynamics of the system\cite{Wang2026}.
As a first step toward formulating the RHS for the non-Hermitian Liouvillian
operator, the present study is devoted to investigating the super bra-ket
formulation governed by a quasi-Hermitian Liouville operator.

In the next section, we offer a general construction of the RLS by utilizing the mathematical fact that Liouville space can be unitarily transformed into the tensor product of Hilbert spaces.  
A major advantage of this construction is that it clarifies how the super bra-ket vectors and spectral expansions are described on the dual spaces within the RLS framework.
Furthermore, it can be readily extended to the quasi-Hermitian case.
Using the obtained RLS, we fully present the mathematical formulation of the spectral expansion of the super bra and ket vectors for the Hermitian Liouville operator in Section~\ref{sec:3}. 
As a practical example, we apply our formulation to the harmonic oscillator at the end of the section.
Section~\ref{sec:4} focuses on the quasi-Hermitian Liouville operator and investigates the mathematical structure of its spectral expansions by the corresponding super bra and ket vectors.
We then show that, in executing the super bra-ket formalism under the dual spaces, the adjoint of a given non-Hermitian Liouvillian operator can be consistently defined, and the corresponding spectral expansions of the super bra and ket vectors can be formulated.
The obtained results are applied to the non-Hermitian harmonic oscillator to clarify the essential  differences from the Hermitian case.
Finally, the conclusions are given in Section~\ref{sec:6}.
%


\section{Construction of a rigged Liouville space  induced from the tensor product of rigged Hilbert spaces}
\label{sec:2}

We now present construction of a RLS from the perspective of the unitary transformation between the Liouville space and a tensor product of Hilbert spaces.
We recall that the Liouville space is given as the Hilbert space, $\mathcal{L}(\mathcal{H})=(\mathcal{L}(\mathcal{H}),\langle \cdot,\, \cdot\rangle_{\mathcal{L}})$, composed of the set of Hilbert–Schmidt operators on the Hilbert space $\mathcal{H}$ with the inner product $\langle A,\, B\rangle_{\mathcal{L}}=\Tr{A^{\dagger}B}=\sum_{i}\langle \varphi_i,\, A^\dagger B\varphi_i\rangle _\mathcal{H}$ where $\{\varphi_i\}$ is an orthonormal base of $\mathcal{H}$\cite{Prugovecki1981}.
Set the tensor product of a given RHS, $\Phi\subset \mathcal{H} \subset\Phi^\prime,\Phi^\times$, by
\begin{eqnarray}
    \Phi \hat{\otimes} {\Phi} \subset \mathcal{H} \bar {\otimes} \mathcal{H} \subset 
    (\Phi \hat{\otimes} {\Phi})^{\prime},~(\Phi \hat{\otimes} {\Phi})^{\times},
    \label{eqn:tensor_pro_of_RHS}
\end{eqnarray}
where $\mathcal{H} \bar {\otimes} \mathcal{H}=(\mathcal{H} \bar{\otimes} \mathcal{H},\langle \cdot,\, \cdot\rangle_{\mathcal{H} \bar{\otimes} \mathcal{H}})$ is the completion of the algebraic tensor product $\mathcal{H} \otimes \mathcal{H}=\Big{\{}\displaystyle\sum_{j=1}^m\varphi_{1j}\otimes\varphi_{2j}\mid \varphi_{1j},\varphi_{2j}\in \mathcal{H},j=1\sim m,m\in \mathbb{N} \Big{\}}$ for $\mathcal{H}=(\mathcal{H}, \langle \cdot, \cdot \rangle_\mathcal{H})$ equipping the topology induced by an inner product $\langle \cdot,\, \cdot\rangle_{\mathcal{H} \otimes \mathcal{H}}$, satisfying $\langle \varphi_1 \otimes \varphi_2,\, \phi_1\otimes\phi_2\rangle_{\mathcal{H} \otimes \mathcal{H}}=\langle \varphi_1 ,  \phi_1 \rangle_{\mathcal{H}}\langle \varphi_2 ,  \phi_2 \rangle_{\mathcal{H}}$.
The space $\Phi \hat{\otimes} {\Phi}=(\Phi_1 \hat{\otimes} {\Phi}_2,\hat{\tau}_p)$ is the completion of the tensor product $(\Phi \otimes {\Phi},\tau_p)$ where $\Phi \otimes {\Phi}=\Big{\{}\displaystyle\sum_{j=1}^m\varphi_{1j}\otimes\varphi_{2j}\mid \varphi_{1j},\varphi_{2j}\in {\Phi},j=1\sim m,m\in \mathbb{N} \Big{\}}$ and $\tau_p$ is a locally convex topology with the local base $\mathcal{B}_p=\{\Gamma(V\otimes V) \mid V\in \mathcal{B}\}$ where $\mathcal{B}$ is a local base of the nuclear space $(\Phi,\tau_{\Phi})$ and $\Gamma(X)$ stands for the convex circled hull of a set $X$\cite{Schaefer1966}.
The spaces $(\Phi \hat{\otimes} {\Phi})^\prime$ and $(\Phi \hat{\otimes} {\Phi})^\times$ are dual and anti-dual spaces of $\Phi \hat{\otimes} {\Phi}$, respectively.
Note that $\Phi \hat{\otimes} {\Phi}$ becomes a nuclear space and the triplet (\ref{eqn:tensor_pro_of_RHS}) is an RHS\cite{Maurin1968}.
The bra and ket vectors corresponding to $\varphi\in \Phi \hat{\otimes} \Phi$ are defined by
\begin{eqnarray}
    \bra{\varphi}_{\mathcal{H} \bar{\otimes} \mathcal{H}} : \Phi \hat{\otimes} {\Phi} \to \mathbb{C},\phi \mapsto \langle \varphi,\, \phi\rangle_{\mathcal{H} \bar {\otimes} \mathcal{H}}, 
    \label{eqn:bra_of_tensor_pro_of_RHS}\\
    \ket{\varphi}_{\mathcal{H} \bar{\otimes} \mathcal{H}} : \Phi \hat{\otimes} {\Phi} \to \mathbb{C},\phi\mapsto \langle \phi,\, \varphi\rangle_{\mathcal{H} \bar{\otimes} \mathcal{H}}.    
    \label{eqn:ket_of_tensor_pro_of_RHS}
\end{eqnarray}
Here, the relations $\ket{\varphi}_{\mathcal{H} \bar{\otimes} \mathcal{H}}=\bra{\varphi}_{\mathcal{H} \bar{\otimes} \mathcal{H}}^*$, $\bra{\varphi}_{\mathcal{H} \bar{\otimes} \mathcal{H}} \in (\Phi \hat{\otimes} {\Phi})^{\prime}$, and $\ket{\varphi}_{\mathcal{H} \bar{\otimes} \mathcal{H}} \in (\Phi \hat{\otimes} {\Phi})^{\times}$, are satisfied.
Note that for $\varphi\otimes \psi\in \Phi\otimes\Phi\subset \Phi\hat{\otimes}\Phi$, we have the following relations\cite{Ohmori2023}:
\begin{eqnarray}
    \bra{\varphi \otimes \psi}_{\mathcal{H} \bar{\otimes} \mathcal{H}}=\bra{\varphi}_{\mathcal{H}} \otimes \bra{\psi}_{\mathcal{H}},
    ~~\ket{\varphi \otimes \psi}_{\mathcal{H} \bar{\otimes} \mathcal{H}}=\ket{\varphi}_{\mathcal{H}} \otimes \ket{\psi}_{\mathcal{H}}.
\label{eqn:braket_tensor_relation}
\end{eqnarray}
To associate the RHS (\ref{eqn:tensor_pro_of_RHS}) with the Liouville space, $\mathcal{L}(\mathcal{H})=(\mathcal{L}(\mathcal{H}),\langle \cdot,\, \cdot\rangle_{\mathcal{L}})$, we consider  a conjugation $C:\mathcal{H}\to \mathcal{H}$, namely, it satisfies (i) $C$ is anti-linear, (ii) $C^2=I$, and (iii) $\|C\varphi\|_{\mathcal{H}}=\|\varphi\|_{\mathcal{H}}~(\varphi\in \mathcal{H})$.
Note that $C$ is bijective.
Then, there exists the unitary operator,
%
%
$I_C :  (\mathcal{H} \bar{\otimes} \mathcal{H},\langle \cdot,\, \cdot\rangle_{\mathcal{H} \bar{\otimes} \mathcal{H}}) \to (\mathcal{L}(\mathcal{H}),\langle \cdot,\, \cdot\rangle_{\mathcal{L}}),$
which satisfies 
\begin{eqnarray}
I_C(\varphi\otimes C\psi) = P_{\psi,\varphi}
    \label{eqn:LandT_unitary_relation}
\end{eqnarray}
for $\varphi,\psi\in\mathcal{H}$, where
$P_{\psi,\varphi} : \mathcal{H}\to \mathcal{H},\phi\mapsto \langle \psi,\, \phi\rangle _{\mathcal{H}}\varphi$.
%
Note that $P_{\psi,\varphi}$ is often denoted by $\ket{\varphi}\bra{\psi}$.
However, we do not use the notation in this paper to avoid confusion with the bra-ket notation.

In establishing the RLS based on the above unitary operator $I_C$,
we here endow the general construction of RHS in the situation where for given RHS, $\Phi_1 \subset \mathcal{H}_1 \subset \Phi_1^{\prime},\Phi_1^{\times}$, the Hilbert space $\mathcal{H}_1=(\mathcal{H}_1,\langle \cdot, \cdot \rangle_1)$ is associated with another Hilbert space $\mathcal{H}_2=(\mathcal{H}_2,\langle \cdot, \cdot \rangle_2)$ by an unitary operator $U$.
\begin{proposition}
    \label{proposition2.1}
    Let $\Phi _1 \subset \mathcal{H}_1 \subset \Phi^\prime_1,\Phi^\times_1$ be an RHS  and let $\mathcal{H}_2$ be a Hilbert space such that there is a unitary transformation $U$ from $\mathcal{H}_1$ onto $\mathcal{H}_2$.
Then, there is a linear dense subspace $\Phi_2$ of $\mathcal{H}_2$, equipping the nuclear topology $\tau_2$, and the triplet $\Phi _2 \subset \mathcal{H}_2 \subset \Phi^\prime_2,\Phi^\times_2$ is an RHS.
Furthermore, the restriction $T:=U_{\Phi_1}$ of the unitary transformation $U$ to the nuclear space $\Phi_1$ is isomorphic (i.e., linear homeomorphic) onto the nuclear space $(\Phi_2,\tau_2)$.
\end{proposition}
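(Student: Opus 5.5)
We transport all of the structure along $U$. Define $\Phi_2:=U(\Phi_1)\subset\mathcal{H}_2$ and give it the topology
\[
\tau_2:=\{\,U(O)\mid O\in\tau_{\Phi_1}\,\},
\]
the image topology under the linear bijection $T:=U_{\Phi_1}:\Phi_1\to\Phi_2$. With this choice most claims follow by bookkeeping. $\Phi_2$ is a linear subspace because $U$ is linear. It is dense in $\mathcal{H}_2$ because $U$ is a surjective isometry and $\Phi_1$ is dense in $\mathcal{H}_1$: for $\psi\in\mathcal{H}_2$, pick $\phi_n\in\Phi_1$ with $\phi_n\to U^{-1}\psi$, and then $\|U\phi_n-\psi\|_2=\|\phi_n-U^{-1}\psi\|_1\to0$. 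By construction $T$ is a linear bijection that maps open sets to open sets in both directions. Hence $T$ is a linear homeomorphism of $(\Phi_1,\tau_{\Phi_1})$ onto $(\Phi_2,\tau_2)$, which is the last assertion of Proposition~\ref{proposition2.1}.

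Next I would verify that $(\Phi_2,\tau_2)$ is a nuclear space. Images of circled, convex, absorbing neighbourhoods under a linear isomorphism keep these properties, so $\{T(V)\mid V\in\mathcal{B}\}$ is a local base of a locally convex Hausdorff topology. Nuclearity is invariant under topological linear isomorphism. Concretely, if $p_V$ is the gauge of $V$, then the gauge of $T(V)$ is $p_V\circ T^{-1}$. The canonical maps between the local Banach spaces $\widehat{(\Phi_2)_{T(V)}}\to\widehat{(\Phi_2)_{T(W)}}$ are conjugated by isometries to those of $\Phi_1$, so they are nuclear whenever the originals are.

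The main point needing care is the compatibility of $\tau_2$ with the Hilbert structure: the inner product $\langle\cdot,\cdot\rangle_2$ restricted to $\Phi_2$ must be separately continuous for $\tau_2$. For this I would use unitarity:
\[
\langle T\phi,T\psi\rangle_2=\langle\phi,\psi\rangle_1 .
\]
Thus the map $\psi'\mapsto\langle\phi',\psi'\rangle_2$ on $\Phi_2$ equals $(\psi\mapsto\langle T^{-1}\phi',\psi\rangle_1)\circ T^{-1}$. This is a composition of continuous maps, so it is continuous, and the first slot is handled in the same way. This also shows that the inclusion $\Phi_2\hookrightarrow\mathcal{H}_2$ is continuous, since it is the composition $U\circ(\Phi_1\hookrightarrow\mathcal{H}_1)\circ T^{-1}$.

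Finally, I would define $\Phi_2'$ and $\Phi_2^\times$ as the spaces of continuous linear and anti-linear functionals on $(\Phi_2,\tau_2)$. The embeddings $\mathcal{H}_2\to\Phi_2^\prime,\Phi_2^\times$ via bras and kets, $\psi\mapsto\langle\psi,\cdot\rangle_2$ and $\psi\mapsto\langle\cdot,\psi\rangle_2$, are well defined by the separate continuity just established. They are injective because $\Phi_2$ is dense. This yields the triplet $\Phi_2\subset\mathcal{H}_2\subset\Phi_2^\prime,\Phi_2^\times$.

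As a side remark, $f\mapsto f\circ T^{-1}$ gives bijections $\Phi_1^\prime\to\Phi_2^\prime$ and $\Phi_1^\times\to\Phi_2^\times$. These dual maps extend $U$, which is what the later construction of the rigged Liouville space will use.
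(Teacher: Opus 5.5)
Your proposal is correct and follows essentially the same route as the paper: take $\Phi_2=U(\Phi_1)$ with the topology transported by $T=U_{\Phi_1}$, get density from the unitarity of $U$, get continuity of the inclusion as $U\circ i\circ T^{-1}$, and get nuclearity from its invariance under linear homeomorphism. Your nuclearity argument, via the canonical maps between local Banach spaces, is a cleaner and more standard justification than the paper's. The paper instead appeals to a single nuclear map $u\circ T^{-1}$ into a Hilbert space, and the existence of one such map does not by itself characterize nuclearity.
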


\begin{proof}
    By hypothesis, for the nuclear space $\Phi_1=(\Phi_1,\tau_1)$, the including map $i:(\Phi_1,\tau_1) \rightarrow (\mathcal{H}_1,\langle \cdot, \cdot \rangle_1), \varphi \mapsto \varphi$ is continuous and $\Phi_1$ is dense in $(\mathcal{H}_1,\langle \cdot, \cdot \rangle_ {1})$.
    We first assume the existence of a topological vector space $(\Phi_2,\tau_2)$ of a subspace of $\mathcal{H}_2$ such that the restriction $T:=U_{\Phi_1} : (\Phi_1,\tau_1) \rightarrow (\Phi_2,\tau_2)$ is isomorphic.
    Then, the topological vector space $(\Phi_2,\tau_2)$ becomes nuclear.
    To show it, let $\mathcal{K}$ be a Hilbert space.
    Since $(\Phi_1,\tau_1)$ is nuclear, there is a nuclear mapping $u:(\Phi_1,\tau_1) \rightarrow \mathcal{K}$\cite{Schaefer1966}.
    Then, it is confirmed that the composition $u\circ T^{-1}:(\Phi_2,\tau_2) \rightarrow \mathcal{K}$ of $u$ and $T^{-1}$ becomes a nuclear mapping, which implies that $(\Phi_2,\tau_2)$ is a nuclear space.
    Such nuclear space $(\Phi_2,\tau_2)$ is dense in $(\mathcal{H}_2,\langle \cdot, \cdot \rangle_ 2)$.
    This fact can be easily proved because $\Phi_1$ is dense in $\mathcal{H}_1$ and $U$ is the unitary transformation satisfying $\Phi_2=T(\Phi_1)=U(\Phi_1)$.  
    Note that the including mapping $i' : (\Phi_2,\tau_2) \rightarrow (\mathcal{H}_2,\langle \cdot, \cdot \rangle_ 2)$ is continuous, since $i'=U\circ i \circ T^{-1}$ holds.
    Thus, the triplet $\Phi_2\subset \mathcal{H}_2\subset \Phi_2^{\prime},\Phi_2^{\times}$ constructs an RHS.
    Finally, we show the existence of $(\Phi_2,\tau_2)$.
    Let $\Phi_2=U(\Phi_1)$; it is a linear subspace of $\mathcal{H}_2$.
    Now, define a map $T:(\Phi_1,\tau_1)\rightarrow \Phi_2$ as $T(\varphi)=U\varphi$, which is linear bijective.
    Let $\tau_2$ be the induced topology from $\tau_1$ by $T$ where $\tau_2$ has the local base $\beta_2 = \{W\subset \Phi_2 \mid T^{-1}(W)\in \beta _1 \}$ ($\beta _1$ is a local base of $\tau _1$).
    Then, it is easy to confirm that $T:(\Phi_1,\tau_1) \rightarrow (\Phi_2,\tau_2)$ is isomorphic. 
\end{proof}
\noindent
We assume that $C$ is continuous on $(\Phi,\tau_\Phi)$ and $C(\Phi)=\Phi$.
The latter assumption gives for $\varphi,\psi\in\Phi$, $\varphi\otimes C\psi\in \Phi\hat{\otimes}\Phi$.
From Proposition \ref{proposition2.1}, we can obtain the nuclear space, $(\Phi_\mathcal{L},\tau_{\Phi_\mathcal{L}})$, where  $\Phi_\mathcal{L}=I_C(\Phi\hat{\otimes}\Phi)$ and the nuclear topology $\tau_{\Phi_\mathcal{L}}$ possesses the local base $\mathcal{B}=\{W\subset \Phi_\mathcal{L}\mid I_C^{-1}(W)\in\mathcal{B}_{\Phi\hat{\otimes}\Phi}\}$~($\mathcal{B}_{\Phi\hat{\otimes}\Phi}$ is the local base of $\Phi \hat{\otimes}\Phi$).
Then, the triplet 
\begin{eqnarray}
    {\Phi}_\mathcal{L} \subset \mathcal{L}(\mathcal{H}) \subset 
    {\Phi}_\mathcal{L}^{\prime},{\Phi}_\mathcal{L}^{\times}
    \label{eqn:RLS}
\end{eqnarray}
consists of an RHS of the form (\ref{eqn:RHS}).
Hereafter, we call this RHS (\ref{eqn:RLS}) the rigged Liouville space (RLS).
Proposition \ref{proposition2.1} also endows that the restriction of the unitary operator $I_C$ to $\Phi\hat{\otimes}\Phi$, $I_C|\Phi\hat{\otimes}\Phi$, becomes isomorphic from $(\Phi\hat{\otimes}\Phi,\tau_{\Phi\hat{\otimes}\Phi})$ onto $ (\Phi_{\mathcal{L}},\tau_{\Phi_\mathcal{L}})$. 
Note that whenever $\varphi,\psi\in\Phi$, $P_{\psi,\varphi}\in \Phi_{\mathcal{L}}$ holds by $\varphi\otimes C\psi\in \Phi\hat{\otimes}\Phi$.

By using the RLS (\ref{eqn:RLS}), bra and ket vectors can be constructed as follows: for $A\in \Phi_{\mathcal{L}}$, 
\begin{eqnarray}
\bra{A}_{\mathcal{L}} : \Phi_{\mathcal{L}}\to \mathbb{C},B\mapsto \langle A, B \rangle_{\mathcal{L}},
\label{eqn:bra_of_RLS}
\\
\ket{A}_{\mathcal{L}}:\Phi_{\mathcal{L}}\to \mathbb{C},B\mapsto \langle B, A \rangle_{\mathcal{L}}
\label{eqn:ket_of_RLS}.
\end{eqnarray}
These maps (\ref{eqn:bra_of_RLS}) and (\ref{eqn:ket_of_RLS}) are referred to as a super bra and a super ket, respectively.
Note that $\ket{A}_{\mathcal{L}}$ is the complex conjugate map of $\bra{A}_{\mathcal{L}}$, $\ket{A}_{\mathcal{L}}=\bra{A}_{\mathcal{L}}^*$, and $\bra{A}_{\mathcal{L}}\in \Phi_{\mathcal{L}}^\prime$ and $\ket{A}_{\mathcal{L}}\in \Phi_{\mathcal{L}}^\times$ hold.

Let us consider the relation between the super bra (\ref{eqn:bra_of_RLS}) (super ket (\ref{eqn:ket_of_RLS})) and the bra (ket) given in (\ref{eqn:bra_of_tensor_pro_of_RHS})-(\ref{eqn:ket_of_tensor_pro_of_RHS}).
Set the extension of $C$ to $\Phi^\prime\cup \Phi^\times$, $\hat{C} : \Phi^\prime\cup\Phi^\times\to\Phi^\prime\cup\Phi^\times$, by 
\begin{eqnarray}
(\hat{C} (f))(\phi)=f(C (\phi)),
    \label{eqn:C_extension}
\end{eqnarray}
for $f\in \Phi^\prime\cup\Phi^\times$ and $\phi\in \Phi$.
Then, $\hat{C}$ has the following properties.
(i) For $f\in\Phi^{\prime},~\hat{C}(f)\in \Phi^\times$, and for $f\in\Phi^{\times},~\hat{C}(f)\in \Phi^\prime$.
(ii) $\hat{C}:\Phi^{\prime}\to \Phi^{\times}$ and $\hat{C}:\Phi^{\times}\to \Phi^{\prime}$ are linear bijective (isomorphic).
(iii) For $\varphi\in \Phi$,
\begin{eqnarray}
\bra{\varphi}_{\mathcal{H}}\hat{C}\equiv\hat{C}(\bra{\varphi}_{\mathcal{H}})=\hat{C}(\ket{\varphi}^*_{\mathcal{H}})=\ket{C\varphi}_{\mathcal{H}},~~
\bra{\varphi}^*_{\mathcal{H}}\hat{C}\equiv\hat{C}(\bra{\varphi}_{\mathcal{H}}^*)=\hat{C}\ket{\varphi}_{\mathcal{H}}=\bra{C\varphi}_{\mathcal{H}}. 
    \label{eqn:C_relation}
\end{eqnarray}
The properties (i) and (ii) are shown by the anti-linearity and the bijection of $C$.
The property (iii) is derived from (i) and (ii).
We also define the extension of $I_C$, $\hat{I}_C:(\Phi\hat{\otimes}\Phi)^\prime\cup(\Phi\hat{\otimes}\Phi)^\times\to \Phi_{\mathcal{L}}^\prime \cup \Phi_{\mathcal{L}}^\times$, by 
\begin{eqnarray}
(\hat{I}_C (f))(A)=f(I_C^{-1} (A))
    \label{eqn:I_extension}
\end{eqnarray}
for $f\in\Phi\hat{\otimes}\Phi$ and $A \in \Phi_{\mathcal{L}}$.
Since $I_C$ is isomorphic,  $\hat{I}_C$ is also isomorphic from $(\Phi\hat{\otimes}\Phi)^j$ onto $\Phi_{\mathcal{L}}^j$, $(j=\prime,\times)$.
Note that its inverse $\hat{I}_C^{-1}$ satisfies $\hat{I}_C^{-1}=\widehat{I_C^{-1}}$ where $\widehat{I_C^{-1}}$ is the extension of the inverse $I_C^{-1}$ of $I_C$.
Using the extensions $\hat{C}$ and $\hat{I}_C$, for $\varphi,\psi\in \Phi$,
we have
\begin{eqnarray}
    \bra{P_{\psi,\varphi}}_{\mathcal{L}} & = & \bra{I_C(\varphi\otimes C\psi)}_{\mathcal{L}}
    =\bra{\varphi\otimes C\psi}_{\mathcal{H}\bar{\otimes}\mathcal{H}}\hat{I}_C
    \nonumber\\
    & = & 
    (\bra{\varphi}_{\mathcal{H}}\otimes \bra{C\psi}_{\mathcal{H}})\hat{I}_C
    = (\bra{\varphi}_{\mathcal{H}}\otimes \bra{\psi}^*_{\mathcal{H}}\hat{C})\hat{I}_C.
     \label{eqn:bra_calculation_1}
\end{eqnarray}
Similarly, 
\begin{eqnarray}
    \ket{P_{\psi,\varphi}}_{\mathcal{L}} & = & \ket{I_C(\varphi\otimes C\psi)}_{\mathcal{L}}
    =\hat{I}_C\ket{\varphi\otimes C\psi}_{\mathcal{H}\bar{\otimes}\mathcal{H}}
    \nonumber\\
    & = & 
    \hat{I}_C(\ket{\varphi}_{\mathcal{H}}\otimes \ket{C\psi}_{\mathcal{H}}) 
    = 
    \hat{I}_C(\ket{\varphi}_{\mathcal{H}}\otimes \hat{C}\ket{\psi}^*_{\mathcal{H}}).
     \label{eqn:bra_calculation_2}
\end{eqnarray}
Based on these relations,
the super bra and the super ket can be associated with the bra and ker of the tensor product of RHSs.

When denoting the super bra and super ket of $P_{\psi,\varphi}$ by
\begin{eqnarray}
\langle \! \langle{\varphi,\psi}|_\mathcal{L}\equiv\bra{P_{\psi,\varphi}}_{\mathcal{L}}\text{~~~and~~~}
| \varphi,\psi \rangle \! \rangle_\mathcal{L} \equiv\ket{P_{\psi,\varphi}}_{\mathcal{L}}
\label{eqn:double_bra_ket},
\end{eqnarray}
respectively, the relations
(\ref{eqn:bra_calculation_1}) and (\ref{eqn:bra_calculation_2}) can be expressed as 
\begin{eqnarray}
\langle \! \langle{\varphi,\psi}|_\mathcal{L}& =&(\bra{\varphi}_{\mathcal{H}}\otimes\bra{\psi}_{\mathcal{H}}^*\hat{C})\hat{I}_C,
\label{eqn:super_bra_relation_a}
\\
| \varphi,\psi \rangle \! \rangle_\mathcal{L} & = &\hat{I}_C(\ket{\varphi}_{\mathcal{H}}\otimes \hat{C}\ket{\psi}^*_{\mathcal{H}})
\label{eqn:super_bra_relation_b}.
\end{eqnarray}
Furthermore, because $\hat{C}$ is isomorphic, $\bra{\varphi}_\mathcal{H}^*\hat{C}$ and $\hat{C}\ket{\varphi}^*_\mathcal{H}$ can be identified with $\bra{\varphi}^*_\mathcal{H}$ and $\ket{\varphi}^*_\mathcal{H}$, respectively.
Through the identification, the relations (\ref{eqn:super_bra_relation_a}) and (\ref{eqn:super_bra_relation_b}) are represented by  
\begin{eqnarray}
\langle \! \langle{\varphi,\psi}|_\mathcal{L} =(\bra{\varphi}_{\mathcal{H}}\otimes\bra{\psi}_{\mathcal{H}}^*)\hat{I}_C \text{~~~and~~~}
| \varphi,\psi \rangle \! \rangle _\mathcal{L} = \hat{I}_C(\ket{\varphi}_{\mathcal{H}}\otimes\ket{\psi}^*_{\mathcal{H}}),
\label{eqn:super_bra_relation_identitification}
\end{eqnarray}
which can be often used in literatures concerning Liouville space formalism\cite{Gyamfi2020}.

\section{Spectral decomposition of Hermitian Liouville operator}
\label{sec:3}



In the previous section, we introduced the RLS (\ref{eqn:RLS}) by associating it with the tensor product of RHS (\ref{eqn:tensor_pro_of_RHS}) and provided the super bra (\ref{eqn:bra_of_RLS}) and the super ket (\ref{eqn:ket_of_RLS}) as the mappings belonging to $\Phi_{\mathcal{L}}^\prime$ and $\Phi_{\mathcal{L}}^\times$, respectively.
This section focuses on the Liouvillian operator (Liouvillian), denoted by $\mathcal{L}_H$, acting on the RLS and demonstrates how to describe its spectral expansions of the super bra and the super ket.
%
%
%
In general, to a Hamiltonian $H:D(H)\to\mathcal{H}$ on its domain $D$ into a Hilbert space $\mathcal{H}$, the corresponding Liouvillian operator $\mathcal{L}_H:D(\mathcal{L}_H)\to \mathcal{\mathcal{H}}$ can be constructed as a linear operator in the Liouville space $(\mathcal{L}(\mathcal{H}),\langle \cdot,\, \cdot\rangle_{\mathcal{L}})$, satisfying the unitary equivalence relation,
\begin{eqnarray}
    \mathcal{L}_H=I_C^{-1} L {I_C},
\label{eqn:relation_between_Liouvillian_and_L}
\end{eqnarray}
where $I_C$ is the unitary operator given by (\ref{eqn:LandT_unitary_relation}) and $L$ is the closure of the linear operator $H\otimes I -I\otimes CHC$ in $\mathcal{H}\bar{\otimes}\mathcal{H}$\cite{Spohn2004}, namely,
\begin{eqnarray}
    L=\overline{H\otimes I -I\otimes CHC} : {D}({L}) \to \mathcal{H} \bar{\otimes}\mathcal{H}. 
    \label{eqn:L_composite_operator}
\end{eqnarray}
Note that $I_C {D}({L})={D}({\mathcal{\mathcal{L}}_H})$.
When supposing that $H$ is self-adjoint, $L$ becomes self-adjoint and by (\ref{eqn:relation_between_Liouvillian_and_L}),  $\mathcal{L}$ is also self-adjoint.
The relation (\ref{eqn:relation_between_Liouvillian_and_L}) also provides that the spectral expansion for $\mathcal{L}_H$ can be systematically derived from the spectral expansion for $L$, which are obtained based on the tensor product of RHS (\ref{eqn:tensor_pro_of_RHS}).
In the following subsections, we first describe the spectral expansions for $L$.
And based on the obtained descriptions the spectral expansions of $\mathcal{L}_H$ are constructed in the succeeding sections.
Note that although the present section treats a Hermitian (self-adjoint) Liouville operator, the methodology developed here exhibits particular advantages when applied to a quasi-Hermitian Liouville operator, as will be discussed in Sec. \ref{sec:4}.

\subsection{Spectral expansion of $L$}
\label{subsec.3.1}

Assume that the Hamiltonian $H$ is continuous on $(\Phi,\tau_\Phi)$ and $H\Phi\subset \Phi$.  
Since $C$ is continuous anti-linear bijective from $\Phi$ onto $\Phi$, $CHC$ is continuous linear on $\Phi$ with the relation $CHC\Phi\subset\Phi$.
Thus, $L$, given by (\ref{eqn:L_composite_operator}), is continuous on $(\Phi\hat\otimes\Phi,\tau_{\Phi\hat\otimes\Phi})$ and satisfies $L(\Phi\hat\otimes\Phi)\subset \Phi\hat\otimes\Phi$.
Note that if $CD(H)\subset D(H)$ is satisfied, the relations $CD(H)=D(H)$ and $Sp(CHC)=Sp(H)$ hold.
When defining the extensions of $H$ and $CHC$, $\hat{H},~\widehat{CHC}:\Phi^\prime\cup \Phi^\times\to \Phi^\prime\cup \Phi^\times$, by $\widehat{H}(f)(\varphi)=f(H\varphi)$ and $\widehat{CHC}(f)(\varphi)=f(CHC\varphi)$ for $f\in \Phi^\prime\cup \Phi^\times,\varphi\in \Phi$, respectively,
it is easy to confirm the relation, $\widehat{CHC}=\hat{C}\hat{H}\hat{C}$, where $\hat{C}$ is the extension of $C$ given by (\ref{eqn:C_extension}).

From the nuclear spectral theorem\cite{Maurin1968}, $H$ has a set of generalized eigenvectors $\{\ket{\lambda}_{\mathcal{H}}\}_{\lambda \in Sp(H)}$, each $\ket{\lambda}_\mathcal{H}\in \Phi^\times$ corresponding to the eigenvalue $\lambda\in Sp(H)$ for $H$, such that the following eigenequations are satisfied,
\begin{eqnarray}
    \bra{\lambda}_\mathcal{H}\hat{H}=\lambda\bra{\lambda}_\mathcal{H},~~
        \hat{H}\ket{\lambda}_\mathcal{H}=\lambda\ket{\lambda}_\mathcal{H},
       \label{eqn:eigenequation_for_H}
\end{eqnarray}
where $\bra{\lambda}_\mathcal{H}\equiv\ket{\lambda}_\mathcal{H}^*\in \Phi^\prime$.
Furthermore, for any $\varphi,\psi\in \Phi$,
\begin{eqnarray}
    \langle \varphi, \psi\rangle_\mathcal{H} = \int_{Sp(H)} \braket{\varphi}{\lambda}_{\mathcal{H}}\braket{\lambda}{\psi}_{\mathcal{H}} d\mu (\lambda),
    \label{eqn:relation_3.2.1a}
    \\
    \langle \varphi, H\psi\rangle_\mathcal{H} = \int_{Sp(H)} \lambda\braket{\varphi}{\lambda}_{\mathcal{H}}\braket{\lambda}{\psi}_{\mathcal{H}} d\mu (\lambda),    \label{eqn:relation_3.2.1b}
\end{eqnarray}
where $\mu$ is the Borel measure. 
Since $\ket{\lambda}_\mathcal{H}$ is the generalized eigenvalue for $H$, $\ket{\lambda}_\mathcal{H}(H\varphi)=\lambda\ket{\lambda}_\mathcal{H}(\varphi)$ holds for $\varphi\in\Phi$.
Then, we have
\begin{eqnarray}
\hat{C}\ket{\lambda}_\mathcal{H}(CHC\varphi)=\ket{\lambda}_\mathcal{H}(CCHC\varphi)=\ket{\lambda}_\mathcal{H}(HC\varphi)
\nonumber \\
=\lambda\ket{\lambda}_\mathcal{H}(C\varphi)=\hat{C}(\lambda\ket{\lambda}_\mathcal{H})(\varphi)=\lambda\hat{C}\ket{\lambda}_\mathcal{H}(\varphi).
\label{eqn:generalized_eigenvector_for_C_calclulation}
\end{eqnarray}
Here $CC=C^2=I$ and the linearity of $\hat{C}$ are used.
Note that 
for $\ket{\lambda}_\mathcal{H}\in\Phi^\times$, $\hat{C}\ket{\lambda}_\mathcal{H}\in\Phi^\prime$, because $\hat{C}$ transforms an element of $\Phi^\times$ to $\Phi^\prime$. 
Similarly, $\bra{\lambda}_\mathcal{H}\hat{C}(\equiv\hat{C}(\bra{\lambda}_\mathcal{H}))\in\Phi^\times$ for $\bra{\lambda}_\mathcal{H}\in \Phi^\prime$.
(\ref{eqn:generalized_eigenvector_for_C_calclulation}) shows that $\hat{C}\ket{\lambda}_\mathcal{H}$ is a generalized eigenvector of $CHC$ corresponding to the eigenvalue $\lambda\in Sp(CHC)=Sp(H)$.
By introducing the notations, 
\begin{eqnarray}
\bra{\lambda^C}_\mathcal{H}\equiv\hat{C}\ket{\lambda}_\mathcal{H}=\hat{C}(\bra{\lambda}_\mathcal{H}^*),~~
\ket{\lambda^C}_\mathcal{H}\equiv\bra{\lambda}_\mathcal{H}\hat{C}=\hat{C}\ket{\lambda}_\mathcal{H}^*,
\label{eqn:notation_of_generalized_eigenvector_for_C}
\end{eqnarray}
the following eigenequations are obtained using (\ref{eqn:generalized_eigenvector_for_C_calclulation}).
\begin{eqnarray}
    \bra{\lambda^C}_\mathcal{H}\hat{C}\hat{H}\hat{C}=\lambda\bra{\lambda^C}_\mathcal{H},~~
        \hat{C}\hat{H}\hat{C}\ket{\lambda^C}_\mathcal{H}=\lambda\ket{\lambda^C}_\mathcal{H}.
       \label{eqn:eigenequation_for_CHC}
\end{eqnarray}
Note that $\bra{\lambda^C}_\mathcal{H}\in \Phi^\prime$ and $\ket{\lambda^C}_\mathcal{H}\in \Phi^\times$. 
Furthermore, the following relations are satisfied: for $\varphi,\psi \in \Phi$,
\begin{eqnarray}
    \langle \varphi, \psi\rangle_\mathcal{H} & = & \int_{Sp(H)} \braket{\varphi}{\lambda^C}_{\mathcal{H}}\braket{\lambda^C}{\psi}_{\mathcal{H}} d\mu (\lambda),
    \label{eqn:relation_3.2.2a}\\
       \langle \varphi, CHC\psi\rangle_\mathcal{H} & = & \int_{Sp(H)} \lambda \braket{\varphi}{\lambda^C}_{\mathcal{H}}\braket{\lambda^C}{\psi}_{\mathcal{H}} d\mu (\lambda),   \label{eqn:relation_3.2.2b}
\end{eqnarray}
where $\braket{\varphi}{\lambda^C}_\mathcal{H}=\ket{\lambda^C}_\mathcal{H}(\varphi)$ and $\braket{\lambda^C}{\varphi}_\mathcal{H}=\bra{\lambda^C}_\mathcal{H}(\varphi)$.
Actually, 
$\langle \varphi, \psi\rangle_\mathcal{H} = \langle C\psi, C\varphi\rangle_\mathcal{H} = \int_{Sp(H)} \braket{C\psi}{\lambda}_{\mathcal{H}}\braket{\lambda}{C\varphi}_{\mathcal{H}} d\mu (\lambda)$
where 
$\braket{C\psi}{\lambda}_{\mathcal{H}}=\ket{\lambda}_\mathcal{H}(C\psi)$ $=\hat{C}\ket{\lambda}_\mathcal{H}(\psi)=\bra{\lambda^C}_\mathcal{H}(\psi)=\braket{\lambda^C}{\psi}$ 
and 
$\braket{\lambda}{C\varphi}_{\mathcal{H}}=\bra{\lambda}_\mathcal{H}(C\varphi)=(\hat{C}\bra{\lambda}_\mathcal{H})(\varphi)=(\bra{\lambda}_\mathcal{H}\hat{C})(\varphi)=\ket{\lambda^C}_\mathcal{H}(\varphi)=\braket{\varphi}{\lambda^C}_\mathcal{H}$.
Thus, (\ref{eqn:relation_3.2.2a}) is obtained.
Using (\ref{eqn:relation_3.2.1b}) and $C^2=I$, we also obtain (\ref{eqn:relation_3.2.2b}).
%
%

For the sets of generalized eigenvectors $\{\ket{\lambda}_\mathcal{H}\}$ corresponding to $H$ and $\{\ket{\lambda^C}_\mathcal{H}\}$ corresponding to $CHC$,
$L$ has the set of the generalized eigenvectors $\{\ket{\lambda_1}_\mathcal{H}\otimes\hat{C}\ket{\lambda_2}_\mathcal{H}\}_{\lambda_1,\lambda_2\in Sp(H)}$ by which the following relations can be obtained: for any $\varphi, \psi \in \Phi \hat{\otimes} \Phi$, 
\begin{eqnarray}
    \langle \varphi,\, \psi\rangle_{\mathcal{H}{\bar{\otimes}} \mathcal{H}}
    & = & 
     \displaystyle
     \int_{\lambda\in Sp({L})}
     \braket{\hat{\varphi}}{\hat{\psi}}_\lambda
     d\nu_\lambda,
    \label{eqn:relation_3.2.3a}\\
    \langle \varphi,\, L\psi\rangle_{\mathcal{H}{\bar{\otimes}} \mathcal{H}}
    & = & 
     \displaystyle
     \int_{\lambda\in Sp({L})}
     \lambda\braket{\hat{\varphi}}{\hat{\psi}}_\lambda
     d\nu_\lambda,
    \label{eqn:relation_3.2.3b}
\end{eqnarray}
with
\begin{eqnarray}
    \displaystyle\braket{\hat{\varphi}}{\hat{\psi}}_\lambda
    =
    \displaystyle\int_{\lambda=\lambda_1-\lambda_2}
     (\ket{\lambda_1}_{\mathcal{H}}\otimes
     \ket{\lambda_2^C}_{\mathcal{H}})(\varphi)
     (\bra{\lambda_1}_{\mathcal{H}}\otimes
     \bra{\lambda_2^C}_{\mathcal{H}})(\psi)
     d\sigma^\lambda_{\lambda_1,\lambda_2},
    \label{eqn:relation_3.2.3c}
\end{eqnarray}
where $\nu_\lambda$ is the Borel measure obtained in the Neumanns' complete spectral theorem\cite{Maurin1968},  
$\sigma^\lambda_{\lambda_1,\lambda_2}$ is also a Borel measure on $\mathbb{R}^2$ whose support is contained in the set $\{(\lambda_1,\lambda_2)\in \mathbb{R}^2 ; \lambda=\lambda_1-\lambda_2, \lambda_i\in Sp(H), i=1,2\}$.
Note that $\bra{\lambda_1}_\mathcal{H}\otimes\bra{\lambda_2^C}_\mathcal{H}\in \Phi^{\prime}\otimes\Phi^\prime\subset(\Phi\hat{\otimes}\Phi)^\prime$ 
and $\ket{\lambda_1}_\mathcal{H}\otimes\ket{\lambda_2^C}_\mathcal{H}\in \Phi^{\times}\otimes\Phi^\times\subset(\Phi\hat{\otimes}\Phi)^\times$.
Here we adopt the following abbreviation \cite{Ohmori2024}:
\begin{eqnarray}
    \int_{\lambda\in Sp(L)}\int_{\lambda=\lambda_1-\lambda_2}\to \int_{Sp(L)}~~ 
    \text{and}~~
    d\sigma^\lambda_{\lambda_1,\lambda_2}d\nu_\lambda \to d\nu.
    \label{abbreviation}
\end{eqnarray}
Also, we denote $(\bra{\lambda_1}_{\mathcal{H}}\otimes
     \bra{\lambda_2^C}_{\mathcal{H}})(\varphi)$ and $(\ket{\lambda_1}_{\mathcal{H}}\otimes
     \ket{\lambda_2^C}_{\mathcal{H}})(\varphi)$ by 
     $\bra{\lambda_1}_{\mathcal{H}}\otimes\bra{\lambda_2^C}_{\mathcal{H}}\ket{\varphi}_{\mathcal{H}\bar{\otimes}\mathcal{H}}
     $ and $\bra{\varphi}_{\mathcal{H}\bar{\otimes}\mathcal{H}}\ket{\lambda_1}_\mathcal{H}\otimes \ket{\lambda_2^C}_\mathcal{H}$, respectively.
Using these expressions, the relations (\ref{eqn:relation_3.2.3a}) and (\ref{eqn:relation_3.2.3b}) can be converted to the following forms, simply:
\begin{eqnarray}
    \langle \varphi,\, \psi\rangle_{\mathcal{H}{\bar{\otimes}} \mathcal{H}}
     & = & 
    \displaystyle\int_{Sp(L)}\bra{\varphi}_{\mathcal{H}\bar{\otimes}\mathcal{H}}
    \ket{\lambda_1}_{\mathcal{H}}\otimes\ket{\lambda_2^C}_\mathcal{H}\bra{\lambda_1}_\mathcal{H}\otimes\bra{\lambda_2^C}_\mathcal{H}\ket{\psi}_{\mathcal{H}{\bar{\otimes}} \mathcal{H}}
     d\nu,
    \label{eqn:relation_3.2.4a}
    \\
    \langle \varphi,\, L\psi\rangle_{\mathcal{H}{\bar{\otimes}} \mathcal{H}}
    & = &
    \displaystyle\int_{Sp(L)}\lambda\bra{\varphi}_{\mathcal{H}\bar{\otimes}\mathcal{H}}
    \ket{\lambda_1}_{\mathcal{H}}\otimes\ket{\lambda_2^C}_\mathcal{H}\bra{\lambda_1}_\mathcal{H}\otimes\bra{\lambda_2^C}_\mathcal{H}\ket{\psi}_{\mathcal{H}{\bar{\otimes}} \mathcal{H}}
     d\nu.
      \label{eqn:relation_3.2.4b}
\end{eqnarray}
From them the spectral expansions of the bra and ket vectors for $L$ are obtained, as follows: for $\varphi \in \Phi_1 \hat{\otimes} \Phi_2$,
    \begin{eqnarray}
    \bra{\varphi}_{\mathcal{H}{\bar\otimes} \mathcal{H}}
    & = &
    \displaystyle\int_{Sp(L)}\bra{\varphi}_{\mathcal{H}\bar{\otimes}\mathcal{H}}
    \ket{\lambda_1}_{\mathcal{H}}\otimes\ket{\lambda_2^C}_\mathcal{H}\bra{\lambda_1}_\mathcal{H}\otimes\bra{\lambda_2^C}_\mathcal{H}
     d\nu,
    \label{eqn:spectral_expansion_bra_for_La}
     \\
    %
    \bra{L\varphi}_{\mathcal{H}{\bar\otimes} \mathcal{H}}
    & = &
    \displaystyle\int_{Sp(L)}\lambda\bra{\varphi}_{\mathcal{H}\bar{\otimes}\mathcal{H}}
    \ket{\lambda_1}_{\mathcal{H}}\otimes\ket{\lambda_2^C}_\mathcal{H}\bra{\lambda_1}_\mathcal{H}\otimes\bra{\lambda_2^C}_\mathcal{H}
     d\nu,
    \label{eqn:spectral_expansion_bra_for_Lb}
\end{eqnarray}
and 
\begin{eqnarray}
     \ket{\varphi}_{\mathcal{H}{\bar\otimes} \mathcal{H}}
    & = &
    \displaystyle\int_{Sp(L)}
\bra{\lambda_1}_\mathcal{H}\otimes\bra{\lambda_2^C}_\mathcal{H}\ket{\varphi}_{\mathcal{H}\bar{\otimes}\mathcal{H}}\ket{\lambda_1}_\mathcal{H}\otimes\ket{\lambda_2^C}_\mathcal{H}
    d\nu,
    \label{eqn:spectral_expansion_ket_for_La}
    \\
         \ket{L\varphi}_{\mathcal{H}{\bar\otimes} \mathcal{H}}
         & = & 
    \displaystyle\int_{Sp(L)}\lambda
     \bra{\lambda_1}_\mathcal{H}\otimes\bra{\lambda_2^C}_\mathcal{H}\ket{\varphi}_{\mathcal{H}\bar{\otimes}\mathcal{H}}\ket{\lambda_1}_\mathcal{H}\otimes\ket{\lambda_2^C}_\mathcal{H}.
    \label{eqn:spectral_expansion_ket_for_Lb}
    \end{eqnarray}
Note that $\{\ket{\lambda_1}_\mathcal{H}\otimes\ket{\lambda_2^C}_\mathcal{H}\}$ constitutes the following orthonormal base\cite{Maurin1968}:
\begin{eqnarray}
    I  = \displaystyle\int_{Sp(L)}\ket{\lambda_1}_\mathcal{H}\otimes\ket{\lambda_2^C}_\mathcal{H}\bra{\lambda_1}_\mathcal{H}\otimes\bra{\lambda_2^C}_\mathcal{H}     
    d\nu,
    \label{eqn:completion_for_L_a}
\end{eqnarray}
and
\begin{eqnarray}
    \bra{\lambda_1}_\mathcal{H}\otimes\bra{\lambda_2^C}_\mathcal{H}
    \ket{\lambda_1^\prime}_\mathcal{H}\otimes\ket{\lambda_2^{\prime C}}_\mathcal{H}=\check{\delta}(\lambda_1-\lambda_1^\prime)\check{\delta}(\lambda_2^{\prime C}-\lambda_2^{\prime C}),
    \label{eqn:completion_for_L_b}
\end{eqnarray}
where $\Check{\delta}$ is the delta-function specifying
\begin{eqnarray}
    f(\lambda_1^\prime,\lambda_2^\prime)
     = 
    \int_{Sp(A)}
    f(\lambda_1,\lambda_2)\Check{\delta}(\lambda_1^\prime-\lambda_1)
    \Check{\delta}(\lambda_2^\prime-\lambda_2)
    d\nu
    \label{deltafunction}
\end{eqnarray}
for any function $f(\lambda_1,\lambda_2)$.

\subsection{Extension of $L$}
\label{sec.4.2}

To derive the spectral expansions for $\mathcal{L}_H$, we next prepare the extension of $L$ to $(\Phi\hat{\otimes}\Phi)^\prime\cup (\Phi\hat{\otimes}\Phi)^\times$ that is defined by
\begin{eqnarray}
    (\hat {L} (f))(\varphi):=f(L(\varphi)),
    \label{eqn:extension_of_L}
\end{eqnarray}
for $f\in (\Phi\hat{\otimes}\Phi)^\prime\cup (\Phi\hat{\otimes}\Phi)^\times$ and $\varphi\in\Phi\hat{\otimes}\Phi$.
Then, it is confirmed that $\hat{L}$ satisfies the following eigenequations with respect to the generalized eigenvector $\ket{\lambda_1}_\mathcal{H}\otimes\ket{\lambda_2^C}$:  
\begin{eqnarray}
    \bra{\lambda_1}_\mathcal{H}\otimes\bra{\lambda_2^C}_\mathcal{H}\hat{L}=(\lambda_1-\lambda_2)\bra{\lambda_1}_\mathcal{H}\otimes\bra{\lambda_2^C}_\mathcal{H},
    \label{eqn:eigenequation_for_La}
    \\
        \hat{L}\ket{\lambda_1}_\mathcal{H}\otimes\ket{\lambda_2^C}_\mathcal{H}=(\lambda_1-\lambda_2)\ket{\lambda_1}_\mathcal{H}\otimes\ket{\lambda_2^C}_\mathcal{H}.
       \label{eqn:eigenequation_for_Lb}
\end{eqnarray}

Let us consider a representation of $\hat{L}$ using $\hat{H}$.
When putting a set $\check{\Phi}\equiv (\Phi^\prime\otimes\Phi^\prime)\cup(\Phi^\times\otimes\Phi^\times)$, $\check{\Phi}$ can be identified with the subset of $(\Phi\hat{\otimes}\Phi)^\prime\cup(\Phi\hat{\otimes}\Phi)^\times$.
Since $L$ is given as (\ref{eqn:L_composite_operator}), it is confirmed that the restriction of the extension $\hat{L}$ to $\check{\Phi}$ is expressed as  
\begin{eqnarray}
    \hat{L}|_{\check{\Phi}}=\hat{H}\otimes\hat{I}-\hat{I}\otimes\hat{C}\hat{H}\hat{C},
    \label{eqn:L_extension_relation_a}
\end{eqnarray}
where $\hat{I}:\Phi^j\to\Phi^j,f\mapsto f$~($j=\prime,\times$) is the identity map, 
$\hat{H}\otimes\hat{I}$ is the tensor product operator of $\hat{H}$ and $\hat{I}$ on $\check{\Phi}$, and $\hat{I}\otimes\hat{C}\hat{H}\hat{C}$ is also the tensor product operator of $\hat{I}$ and $\hat{C}\hat{H}\hat{C}=\widehat{CHC}$ on $\check{\Phi}$.
In general, this type of the operator (\ref{eqn:L_extension_relation_a}) can be extended to $(\Phi\hat{\otimes}\Phi)^\prime\cup (\Phi\hat{\otimes}\Phi)^\times$ by the following general procedure\cite{Ohmori2024}.
Set a pair of RHSs', $\Phi_i\subset\mathcal{H}_i\subset\Phi_i^{\prime}, \Phi_i^{\times}$ $(i=1,2)$. 
Suppose that $A:{D}(A)\to \mathcal{H}_1$ and $B:{D}(B)\to \mathcal{H}_2$ are  linear operators that satisfy continuity on $(\Phi_1,\tau_{\Phi_1})$ and $(\Phi_2,\tau_{\Phi_2})$ and $A\Phi_1\subset\Phi_1$ and $B\Phi_2\subset\Phi_2$, respectively.
For the extensions of $A$ and $B$, $\hat{A}:\Phi_1^\prime\cup\Phi_1^\times\to\Phi_1^\prime\cup\Phi_1^\times$ and 
$\hat{B}:\Phi_2^\prime\cup\Phi_2^\times\to \Phi_2^\prime\cup\Phi_2^\times$, let $\hat{A}\otimes \hat{B}$ be their tensor product operator from $\check{\Phi}$ into $\check{\Phi}$.
Then, the linear operator 
\begin{eqnarray}
\overline{\hat{A}\otimes\hat{B}}:(\Phi_1\hat\otimes\Phi_2)^\prime\cup (\Phi_1\hat\otimes\Phi_2)^\times\to (\Phi_1\hat\otimes\Phi_2)^\prime\cup (\Phi_1\hat\otimes\Phi_2)^\times,
\label{eqn:def_extension_dual_0}
\end{eqnarray}
defined by
\begin{eqnarray}
    \overline{\hat{A}\otimes\hat{B}}(f)(\phi)
    =
    f((A\otimes B) (\phi)),
\label{eqn:def_extension_dual}
\end{eqnarray}
for 
$ f\in (\Phi_1\hat\otimes\Phi_2)^\prime\cup (\Phi_1\hat\otimes\Phi_2)^\times,\phi\in\Phi_1\hat\otimes\Phi_2$, is 
the extension of $\hat{A}\otimes \hat{B}$ to $(\Phi\hat{\otimes}\Phi)^\prime\cup (\Phi\hat{\otimes}\Phi)^\times$.
Note that the extension is unique, and when setting the operator  
$\overline{\hat{A}_1\otimes\hat{B}_1\pm \hat{A}_2\otimes\hat{B}_2}$ on $(\Phi_1\hat\otimes\Phi_2)^\prime\cup (\Phi_1\hat\otimes\Phi_2)^\times$ by
$
\overline{\hat{A}_1\otimes\hat{B}_1\pm \hat{A}_2\otimes\hat{B}_2}(f)(\phi)
=
f((A_1\otimes B_1\pm A_2\otimes B_2) (\phi))$,
for 
$ f\in (\Phi_1\hat\otimes\Phi_2)^j ~(j=\prime,\times),\phi\in\Phi_1\hat\otimes\Phi_2$,
the relationship 
$\overline{\hat{A}_1\otimes\hat{B}_1\pm \hat{A}_2\otimes\hat{B}_2}=\overline{\hat{A}_1\otimes\hat{B}_1}\pm \overline{\hat{A}_2\otimes\hat{B}_2}$ is satisfied.
Therefore, by the definition (\ref{eqn:def_extension_dual}), 
the extension of $\hat{L}|_{\check{\Phi}}=\hat{H}\otimes\hat{I}-\hat{I}\otimes\hat{C}\hat{H}\hat{C}$ to $(\Phi\hat{\otimes}\Phi)^\prime\cup(\Phi\hat{\otimes}\Phi)^\times$ can be obtained as 
$\overline{\hat{H}\otimes\hat{I}-\hat{I}\otimes\hat{C}\hat{H}\hat{C}}$
and since such extension is unique, it coincides with $\hat{L}$.
Thus, we finally obtain the representation of $\hat{L}$ as
\begin{eqnarray}
    \hat{L}=\overline{\hat{H}\otimes\hat{I}-\hat{I}\otimes\hat{C}\hat{H}\hat{C}} : (\Phi\hat{\otimes}\Phi)^\prime\cup (\Phi\hat{\otimes}\Phi)^\times \to (\Phi\hat{\otimes}\Phi)^\prime\cup (\Phi\hat{\otimes}\Phi)^\times.
    \label{eqn:L_extension_relation_to_dual_spaces}
\end{eqnarray}

\subsection{Spectral expansion of $\mathcal{L}_H$}
\label{sec.4.3}

We now demonstrate the construction of the spectral expansions for the Liouvillian $\mathcal{L}_H$ based on the RLS (\ref{eqn:RLS}).
First, we extend the Liouvillian $\mathcal{L}_H$ to the dual spaces $\Phi^\prime_\mathcal{L}\cup\Phi^\times_\mathcal{L}$ in the same manner as $L$ in (\ref{eqn:extension_of_L}); its extension is denoted by $\hat{\mathcal{L}}_H$.
Using the isomorphism, $\hat{I}_C$, from $(\Phi \hat{\otimes}\Phi)^j$ onto $\Phi^j_\mathcal{L}$~$(j=\prime,\times)$ given as (\ref{eqn:I_extension}) in Sec. \ref{sec:2}, the relation between $\hat{\mathcal{L}}_H$ and $\hat{L}$ is shown as
\begin{eqnarray}
    \hat{\mathcal{L}}_H =\hat{I}_C^{-1}\hat{L}\hat{I}_C.
    \label{eqn:relation_L_and_Liouvillian}
\end{eqnarray}
For $\bra{\lambda_1}_\mathcal{H}\otimes\bra{\lambda_2^C}_\mathcal{H}\in (\Phi \hat{\otimes}\Phi)^\prime$ and $\ket{\lambda_1}_\mathcal{H}\otimes\ket{\lambda_2^C}_\mathcal{H}\in (\Phi \hat{\otimes}\Phi)^\times$, 
we introduce the notations, 
\begin{eqnarray}
\langle \! \langle{\lambda_1,\lambda_2}|_\mathcal{L}& \equiv & (\bra{\lambda_1}_{\mathcal{H}}\otimes\bra{\lambda_2^C}_{\mathcal{H}})\hat{I}_C,
\label{eqn:generalized_eigenvector_for_Liouvillian_a}\\
| \lambda_1,\lambda_2 \rangle \! \rangle_\mathcal{L} & \equiv & \hat{I}_C(\ket{\lambda_1}_{\mathcal{H}}\otimes \ket{\lambda_2^C}_{\mathcal{H}}).
\label{eqn:generalized_eigenvector_for_Liouvillian_b}
\end{eqnarray}
It is confirmed that they are generalized eigenvectors of $\mathcal{L}_H$ corresponding to $\lambda_1-\lambda_2$ in $Sp(\mathcal{L_H})=Sp(L)$ and the following eigenequations are satisfied:
\begin{eqnarray}
    \langle \! \langle{\lambda_1,\lambda_2}|_{\mathcal{L}}\hat{\mathcal{L}}_H =(\lambda_1-\lambda_2)\langle \! \langle{\lambda_1,\lambda_2}|_\mathcal{L},~~
    \hat{\mathcal{L}}_H| \lambda_1,\lambda_2 \rangle \! \rangle_{\mathcal{L}}=(\lambda_1-\lambda_2)| \lambda_1,\lambda_2 \rangle \! \rangle_\mathcal{L}.
    \label{eqn:generalized_eigenequations_for_Liouvillian}
\end{eqnarray}
Also, the following relations can be derived from (\ref{eqn:relation_3.2.4a}) and (\ref{eqn:relation_3.2.4b}): for $A,B\in \Phi_\mathcal{L}$,
    \begin{eqnarray}
    \langle A,\, B\rangle_{\mathcal{L}}
    & = &
    \displaystyle\int_{Sp(\mathcal{L}_H)}
   \langle A\, | \lambda_1,\lambda_2 \rangle \! \rangle_\mathcal{L} \langle \! \langle{\lambda_1,\lambda_2}|B\rangle_\mathcal{L}
     d\nu,\\
    \label{eqn:spectral_relation_for_Liouvillian_a}
    \langle A,\, \mathcal{L}_H B\rangle_{\mathcal{L}}
    & = &
    \displaystyle\int_{Sp(\mathcal{L}_H)}\lambda
   \langle A\, | \lambda_1,\lambda_2 \rangle \! \rangle _\mathcal{L}\langle \! \langle{\lambda_1,\lambda_2}|B\rangle_\mathcal{L}
     d\nu,
    \label{eqn:spectral_relation_for_Liouvillian_b}
    \end{eqnarray}
where $\langle A\, | \lambda_1,\lambda_2 \rangle \! \rangle_\mathcal{L}=| \lambda_1,\lambda_2 \rangle \! \rangle _\mathcal{L}(A)$ and $\langle \! \langle{\lambda_1,\lambda_2}|A\rangle_\mathcal{L}=\langle \! \langle{\lambda_1,\lambda_2}|_\mathcal{L}(A)$.
Therefore, the spectral expansions of the super bra $\bra{A}_\mathcal{L}$ and the super ket $\ket{A}_\mathcal{L}$ for the Liouvillian $\mathcal{L}_H$ by the set of its generalized eigenvectors $\{| \lambda_1,\lambda_2 \rangle \! \rangle_\mathcal{L}\}$ are obtained, as follows: for $A\in\Phi_\mathcal{L}$,
\begin{eqnarray}
    \bra{A}_{\mathcal{L}} & = & \int_{Sp(\mathcal{L}_H)} \langle A\, | \lambda_1,\lambda_2 \rangle \! \rangle_\mathcal{L} \langle \! \langle{\lambda_1,\lambda_2}|_\mathcal{L}
     d\nu,\\
     \label{eqn:expansions_of_bra_by_Liouvillian_a}
    \bra{A}_{\mathcal{L}}\hat{\mathcal{L}}_H & = & \int_{Sp(\mathcal{L}_H)} \lambda \langle A\, | \lambda_1,\lambda_2 \rangle \! \rangle _\mathcal{L}\langle \! \langle{\lambda_1,\lambda_2}|_\mathcal{L}
     d\nu,
    \label{eqn:expansions_of_bra_by_Liouvillian_b}
\end{eqnarray}
and
\begin{eqnarray}
    \ket{A}_{\mathcal{L}} & = & \int_{Sp(\mathcal{L}_H)} 
    \langle \! \langle{\lambda_1,\lambda_2}|A \rangle_\mathcal{L}
    | \lambda_1,\lambda_2 \rangle \! \rangle _\mathcal{L}
     d\nu,\\
     \label{eqn:expansions_of_ket_by_Liouvillian_a}
    \hat{\mathcal{L}}_H\ket{A}_{\mathcal{L}} & = & \int_{Sp(\mathcal{L}_H)} \lambda
    \langle \! \langle{\lambda_1,\lambda_2}|A \rangle
    _\mathcal{L}| \lambda_1,\lambda_2 \rangle \! \rangle _\mathcal{L}
     d\nu.
    \label{eqn:expansions_of_ket_by_Liouvillian_b}
\end{eqnarray}
Note that $\{| \lambda_1,\lambda_2 \rangle \! \rangle\}$ gives the orthonormal base that satisfies  
\begin{eqnarray}
    I =  \displaystyle\int_{Sp(\mathcal{L}_H)}
    | \lambda_1,\lambda_2 \rangle \! \rangle_\mathcal{L}
    \langle \! \langle \lambda_1,\lambda_2 |_\mathcal{L}
    d\nu,~~
\langle \! \langle \lambda_1,\lambda_2 |_\mathcal{L}| \lambda_1^\prime,\lambda_2^\prime \rangle \! \rangle_\mathcal{L}=\delta(\lambda_1-\lambda_1^\prime)\delta(\lambda_2-\lambda_2^\prime),
    \label{eqn:completion_for_Liouvillian}
\end{eqnarray}
as well as the orthonormal base, (\ref{eqn:completion_for_L_a}) and (\ref{eqn:completion_for_L_b}), for $\{\ket{\lambda_1}_\mathcal{H}\otimes\ket{\lambda_2^C}_\mathcal{H}\}$.

For $\varphi,\psi\in \Phi$, the super bra, $\langle \! \langle{\varphi,\psi}|_\mathcal{L}\in \Phi_{\mathcal{L}}^\prime$, and the super ket, $| \varphi,\psi \rangle \! \rangle_\mathcal{L}\in \Phi_{\mathcal{L}}^\times$, are given by (\ref{eqn:double_bra_ket}).
Owing to the reason, we here denote $\langle \! \langle \lambda_1,\lambda_2 |_\mathcal{L}(P_{\psi,\varphi})$ by $\langle \! \langle \lambda_1,\lambda_2 |\varphi,\psi \rangle \! \rangle _\mathcal{L}$ and $|\lambda_1,\lambda_2 \rangle \! \rangle_\mathcal{L} (P_{\psi,\varphi})$ by $\langle \! \langle \varphi,\psi |\lambda_1,\lambda_2 \rangle \! \rangle_\mathcal{L} $, respectively.
Then, the simple calculation,
$\langle \! \langle \lambda_1,\lambda_2 |_\mathcal{L}(P_{\psi,\varphi})=(\bra{\lambda_1}_\mathcal{H}\otimes\bra{\lambda_2^C}_\mathcal{H})\hat{I}_C(P_{\psi,\varphi})=\bra{\lambda_1}_\mathcal{H}\otimes\bra{\lambda_2^C}_\mathcal{H}(I^{-1}_CP_{\psi,\varphi})=\bra{\lambda_1}_\mathcal{H}\otimes\bra{\lambda_2^C}_\mathcal{H}(\varphi\otimes C\psi)=\braket{\lambda_1}{\varphi}_\mathcal{H}\bra{\lambda_2^C}_\mathcal{H}(C\psi)=\braket{\lambda_1}{\varphi}_\mathcal{H}\hat{C}\ket{\lambda_2}_\mathcal{H}(C\psi)=\braket{\lambda_1}{\varphi}_\mathcal{H}\ket{\lambda_2}_{\mathcal{H}}(\psi)=\braket{\lambda_1}{\varphi}_{\mathcal{H}}\braket{\psi}{\lambda_2}_\mathcal{H}$,
leads to 
\begin{eqnarray}
    \langle \! \langle \lambda_1,\lambda_2 |\varphi,\psi \rangle \! \rangle_\mathcal{L} 
    & = &
    \braket{\lambda_1}{\varphi}_{\mathcal{H}}\braket{\psi}{\lambda_2}_\mathcal{H}  =  \braket{\lambda_1}{\varphi}\braket{\lambda_2}{\psi}^*,
    \label{eqn:relation_for_Liouvillian_braket_a}
    \\
\langle \! \langle \varphi,\psi |\lambda_1,\lambda_2 \rangle \! \rangle_\mathcal{L}
    & = &
    \braket{\varphi}{\lambda_1}_{\mathcal{H}}\braket{\lambda_2}{\psi}_\mathcal{H}
    =\braket{\varphi}{\lambda_1}_\mathcal{H}\braket{\psi}{\lambda_2}^*.
    \label{eqn:relation_for_Liouvillian_braket_b}
\end{eqnarray}
Using these relations, the spectral expansions for $\langle \! \langle{\varphi,\psi}|_\mathcal{L}$ and $| \varphi,\psi \rangle \! \rangle_\mathcal{L}$ by $\{| \lambda_1,\lambda_2 \rangle \! \rangle_\mathcal{L}\}$ can be described, as follows: for $\varphi,\psi\in \Phi$,
    \begin{eqnarray}
    \langle \! \langle \varphi,\psi |_\mathcal{L}
    & = &
    \displaystyle\int_{Sp(\mathcal{L}_H)}
    \braket{\varphi}{\lambda_1}_{\mathcal{H}}\braket{\psi}{\lambda_2}_\mathcal{H}^*\langle \! \langle \lambda_1,\lambda_2 |
     d\nu,
    \label{eqn:spectral_expansion_double_bra_for_Lipivillian_a}
    \\
    %
        | \varphi,\psi \rangle \! \rangle_\mathcal{L}
    & = &
    \displaystyle\int_{Sp(\mathcal{L}_H)}
    \braket{\lambda_1}{\varphi}_{\mathcal{H}}\braket{\lambda_2}{\psi}_\mathcal{H}^*
    | \lambda_1,\lambda_2 \rangle \! \rangle
     d\nu.
    \label{eqn:spectral_expansion_double_ket_for_Lipivillian_b}
    \end{eqnarray}
These relations (\ref{eqn:spectral_expansion_double_bra_for_Lipivillian_a}) and (\ref{eqn:spectral_expansion_double_ket_for_Lipivillian_b}) indicate the spectral expansions for the Liouvillian $\mathcal{L}_H$ associated with the generalized eigenvector $\{\ket{\lambda}\}_{\lambda\in Sp(H)}$ of the Hamiltonian $H$.

\subsection{Application to Harmonic Oscillator}
\label{sec:3.4}
%

Our RLS formulation shown in the present section can be applied to several Hermitian quantum systems.
For instance, we consider a one-dimensional harmonic oscillator system whose Hamiltonian is given by 
\begin{eqnarray}
    H_{ho}=\frac{1}{2\mu}P^2+\frac{\mu\omega ^2}{2}Q^2,
    \label{eqn:Harmonic_osc_Hamiltonian}
\end{eqnarray}
where $Q$ and $P$ are the position and momentum operators that satisfy the commutation relation $[P,Q]\equiv PQ-QP=-i\hbar I$ and $\mu,\omega$ are a real constant (mass and frequency, respectively).
It is well-known\cite{Bohm1978} that the RHS corresponding to this Hamiltonian can be established using the Hilbert space $\mathcal{H}_{ho}=(\mathcal{H}_{ho},\langle \cdot,\, \cdot\rangle_{ho})$ with the conuntable orthonormal base $\{\phi_n\}_{n=0}^\infty$ and its dense nuclear subspace $\Phi_{ho}=(\Phi_{ho},\tau_{\Phi_{ho}})$,
\begin{eqnarray}
    \Phi_{ho} \subset \mathcal{H}_{ho} \subset \Phi_{ho}^\prime,\Phi_{ho}^\times.
    \label{eqn:RHS_for_Harmonic_oscillator}
\end{eqnarray}
The nuclear topology $\tau_{\Phi_{ho}}$ is provided by the completion of the induced topology from the countable compatible inner products, $\langle\varphi,\, \psi\rangle_p\equiv\langle\varphi,\, (N+I)^p\psi\rangle_{ho},( p=0,1,2,\cdots)$, for $\varphi, \psi \in <\{\phi_n\}>$, where $I$ is the identity, $N$ is the number operator given as $N=\frac{1}{\omega \hbar}H_{ho}-\frac{1}{2}I$, and the notation $< A >$ shows the linear subspace spanned by a subset $A$ of $\mathcal{H}_{ho}$.
Note that $D(H_{ho})=D(N)=\{\sum_{n=0}^\infty\alpha_n\phi_n \mid \sum_{n=0}^\infty n^2|\alpha_n|<+\infty \}$ and $H_{ho}\phi_n=\hbar\omega(n+\frac{1}{2})\phi_n$ $(N\phi_n=n\phi_n)$ for $n=0,1,2,\cdots$.
In the RHS, $\Phi_{ho}\subset \mathcal{H}_{ho} \subset \Phi_{ho}^{\prime},\Phi_{ho}^\times$, the Hamiltonian $H_{ho}:D(H_{ho})\to \mathcal{H}_{ho}$ given by  (\ref{eqn:Harmonic_osc_Hamiltonian}) is continuous on $\Phi_{ho}$ and $H_{ho}\Phi_{ho} \subset \Phi_{ho}$ is satisfied.
We now set a map $C_{ho}:\mathcal{H}_{ho}\to \mathcal{H}_{ho}$ where $C_{ho}(\varphi)=\sum_{n=0}^\infty \alpha_n^*\phi_n$ for $\varphi=\sum_{n=0}^\infty \alpha_n\phi_n$.
It is confirmed that $C_{ho}$ is anti-linear, $C_{ho}^2=I$, and $\|C_{ho}\varphi\|_{ho}=\|\varphi\|_{ho}$, which shows that $C_{ho}$ is a conjugation.
In particular, $C_{ho}(\phi_n)=\phi_n$.
Furthermore, it can be easily verified that $C_{ho}(\Phi_{ho})=\Phi_{ho}, C_{ho}D(H_{ho})\subset D(H_{ho}),C_{ho}H_{ho}\subset H_{ho}C_{ho}$, and $C_{ho}$ is continuous from $(\Phi_{ho},\tau_{\Phi_{ho}})$ onto $(\Phi_{ho},\tau_{\Phi_{ho}})$.
The last property can be shown since $\tau_{\Phi_{ho}}$ is induced from the countable norms $\| \cdot\|_{p}=\sqrt{\langle\cdot,\, \cdot\rangle_p}$ and $C_{ho}(N+I)=(N+I)C_{ho}$ holds on $D(N)$.
Therefore, $C_{ho}$ satisfies the assumptions that we impose so far, and the general formulation developed here can be applied to describing the spectral expansions for the Liouville operator $\mathcal{L}_{H_{ho}}$ corresponding to ${H}_{ho}$. 

As $C_{ho}H_{ho}\subset H_{ho}C_{ho}$, the operator $L_{ho}$ can be represented as $L_{ho}=\overline{H_{ho}\otimes I-I\otimes H_{ho}}$ instead of (\ref{eqn:L_composite_operator}). 
For each generalized eigenvector $\ket{n}\equiv \ket{\phi_n}_{ho}$ of $H_{ho}$, $\hat{C}_{ho}\ket{n}=\bra{n}$ and $\bra{n}\hat{C}_{ho}=\ket{n}$.
Note that from (\ref{eqn:spectral_expansion_bra_for_La})-(\ref{eqn:spectral_expansion_ket_for_Lb}), the spectral expansions for $L_{ho}$ can be executed using $\{\ket{m}\otimes \ket{n}\}$ based on the RHS, $\Phi_{ho}\hat{\otimes}\Phi_{ho}\subset\mathcal{H}_{ho}\bar\otimes \mathcal{H}_{ho}\subset (\Phi_{ho}\hat{\otimes}\Phi_{ho})^\prime,(\Phi_{ho}\hat{\otimes}\Phi_{ho})^\times$.
Applying the representation (\ref{eqn:L_extension_relation_to_dual_spaces}) to the present case, we obtain 
\begin{eqnarray}
\hat{L}_{ho}=\overline{\hat{H}_{ho}\otimes\hat{I}-\hat{I}\otimes\hat{H}_{ho}},
\label{eqn:HO_L_extension_relation_to_dual_spaces}
\end{eqnarray}
and the relation (\ref{eqn:relation_L_and_Liouvillian}) shows 
\begin{eqnarray}
    \hat{\mathcal{L}}_{H_{ho}} =\hat{I}_{C_{ho}}^{-1}\hat{L}_{ho}\hat{I}_{C_{ho}}.
    \label{eqn:HO_relation_L_and_Liouvillian}
\end{eqnarray}
We now set the RLS, 
\begin{eqnarray}
    {\Phi}_{\mathcal{L}_{ho}} \subset \mathcal{L}(\mathcal{H}_{ho}) \subset 
    {\Phi}_{\mathcal{L}_{ho}}^{\prime},{\Phi}_{\mathcal{L}_{ho}}^{\times},
    \label{eqn:RLS_for_Harmonic_oscillator}
\end{eqnarray}
    given by (\ref{eqn:RLS}), corresponding to the RHS (\ref{eqn:RHS_for_Harmonic_oscillator}).
Noting $Sp(L_{ho})=Sp(\mathcal{L}_{\mathcal{H}_{ho}})=\{\hbar\omega (m-n)\mid m,n=0,1,2,\cdots\}$, the spectral expansions of the super bra and the super ket based on the RLS are expressed as follows:
for $A\in{\Phi}_{\mathcal{L}_{ho}}$,
\begin{eqnarray}
    \bra{A}_{\mathcal{L}_{ho}} & = & \sum_{m,n=0}^\infty \langle A\, | m,n \rangle \! \rangle_\mathcal{L} \langle \! \langle{m,n}|_\mathcal{L},
\label{eqn:HO_expansions_of_bra_by_Liouvillian_a}
    \\  
    \bra{A}_{\mathcal{L}_{ho}}\hat{\mathcal{L}}_{H_{ho}} & = & \sum_{m,n=0}^\infty \hbar \omega (m-n) \langle A\, | m,n \rangle \! \rangle_\mathcal{L} \langle \! \langle{m,n}|_\mathcal{L},\label{eqn:HO_expansions_of_bra_by_Liouvillian_b}
\end{eqnarray}
and
\begin{eqnarray}
    \ket{A}_{\mathcal{L}_{ho}} & = & \sum_{m,n=0}^\infty 
    \langle \! \langle{m,n}|A \rangle_\mathcal{L}
    | m,n \rangle \! \rangle _\mathcal{L}
     ,\\
     \label{eqn:HO_expansions_of_ket_by_Liouvillian_a}
    \hat{\mathcal{L}}_{H_{ho}}\ket{A}_{\mathcal{L}_{ho}} & = & \sum_{m,n=0}^\infty \hbar \omega (m-n) 
    \langle \! \langle{m,n}|A \rangle_\mathcal{L}
    | m,n \rangle \! \rangle_\mathcal{L},
\label{eqn:HO_expansions_of_ket_by_Liouvillian_b}
\end{eqnarray}
where $\langle \! \langle{m,n}|_\mathcal{L} =  \bra{m}\otimes\bra{n}\hat{I}_{C_{ho}}\in \Phi_{\mathcal{L}_{ho}}^\prime$ and $| m,n \rangle \! \rangle_\mathcal{L} = \hat{I}_{C_{ho}}\ket{m}\otimes \ket{n}\in \Phi_{\mathcal{L}_{ho}}^\times$.
Also the following eigenequations for $\hat{\mathcal{L}}_{H_{ho}}$ are obtained:
\begin{eqnarray}
    \langle \! \langle{m,n}|_\mathcal{L}\hat{\mathcal{L}}_{H_{ho}} =\hbar \omega(m-n)\langle \! \langle{m,n}|_\mathcal{L},~~
    \hat{\mathcal{L}}_{H_{ho}}| m,n \rangle \! \rangle_\mathcal{L}=\hbar \omega(m-n)| m,n \rangle \! \rangle_\mathcal{L}.
    \label{eqn:HO_generalized_eigenequations_for_Liouvillian}
\end{eqnarray}
Note that they are executed on $\Phi_{\mathcal{L}_{ho}}^\prime\cup\Phi_{\mathcal{L}_{ho}}^\times$.

\section{Spectral decomposition of Quasi-Hermitian Liouville operator}
\label{sec:4}

In the preceding section, we have dealt with the Hermitian Liouville operator related to a Hermitian Hamiltonian.
The present section targets a non-Hermitian case, especially, quasi-Hermitian Liouville operator and presents its RLS formalism.
Hereafter, for a given Hilbert space $\mathcal{H}=(\mathcal{H},\langle \cdot, \cdot\rangle_\mathcal{H})$ and a positive invertible operator $\eta$ on it, we denote $\mathcal{H}_\eta$ as the Hilbert space obtained by completing the pre-Hilbert space $\mathcal{H}_\eta=(\mathcal{H},\langle \cdot, \cdot \rangle_\eta)$ possessing the metric induced from the inner product $\langle \phi, \psi \rangle_\eta
    = \langle \phi, \eta \psi \rangle_\mathcal{H}$ for $\phi,\psi \in \mathcal{H}$ with respect to $\eta$\cite{Ohmori2022}. 


\subsection{Quasi Hermitian property of Liouville operators $L$ and $\mathcal{L}_H$}
\label{sec:4.1}

Let $\eta$ be a positive invertible operator on a Hilbert space $\mathcal{H}$ and $H:D(H)\to \mathcal{H}$ a densely defined closable operator.
Assume that $H$ is an $\eta$-quasi Hermitian operator, having the symmetric relation,
\begin{eqnarray}
    H^\dagger = \eta H \eta ^{-1}.
    \label{eqn:eta_quasi_Hermitian_H}
\end{eqnarray}
Here $\eta$ is regarded as the intertwining operator for $H$ and its adjoint $H^\dagger$ where $\eta D(H)=D(H^\dagger)$ holds.
Then, the operator $L$ of the form (\ref{eqn:L_composite_operator}) becomes an $\eta\otimes \eta$-quasi Hermitian operator.
To show it, we prepare the following two lemmas.
\begin{lemma}
\label{lemma4.1}
Let $A:D(A)\to \mathcal{H}$ be closable and let $C:\mathcal{H}\to \mathcal{H}$ be a conjugation such that $CD(A)\subset D(A)$ holds.
Then, $CAC:D(CAC)\to \mathcal{H}$ is closable.
\end{lemma}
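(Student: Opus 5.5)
We prove closability through the graph characterization: an operator $B$ is closable exactly when, for every sequence $\varphi_n\in D(B)$ with $\varphi_n\to 0$ and $B\varphi_n\to \psi$, we must have $\psi=0$. The plan is to transport such a sequence for $CAC$ through $C$ to a sequence for $A$, and then use the closability of $A$. The only properties of $C$ we use are its anti-linearity, isometry and involutivity ($C^2=I$). Anti-linearity does not interfere, because convergence and the zero vector are preserved under $C$.

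First we identify the domain. By definition, $D(CAC)=\{\varphi\in\mathcal{H}\mid C\varphi\in D(A)\}=C^{-1}D(A)=CD(A)$, using $C^2=I$. The hypothesis $CD(A)\subset D(A)$, applied with $C$ once more, gives $D(A)=C(CD(A))\subset CD(A)$. Hence $CD(A)=D(A)$ and $D(CAC)=D(A)$. This equality is not strictly needed for closability, but it clarifies the situation (in particular, $CAC$ is densely defined whenever $A$ is).

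Next we check linearity. $CAC$ is linear, since it is a composition of two anti-linear maps with a linear one: $CAC(a\varphi)=C(A(a^*C\varphi))=C(a^*AC\varphi)=aCAC\varphi$.

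Now take $\varphi_n\in D(CAC)$ with $\varphi_n\to0$ and $CAC\varphi_n\to\psi$. Set $\xi_n:=C\varphi_n\in D(A)$. Because $C$ is isometric and anti-linear, $\|\xi_n\|=\|\varphi_n\|\to0$. Also
\[
\|A\xi_n-C\psi\|=\|C(CA\xi_n-\psi)\|=\|CAC\varphi_n-\psi\|\to0 .
\]
Here we used $C^2=I$ together with $\|Cx-Cy\|=\|C(x-y)\|=\|x-y\|$, and this identity holds because anti-linearity gives $Cx-Cy=C(x-y)$. So $\xi_n\to0$ and $A\xi_n\to C\psi$. Since $A$ is closable, $C\psi=0$, and therefore $\psi=C^2\psi=0$. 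Thus $CAC$ is closable. As a byproduct we get $\overline{CAC}=C\overline{A}C$: the map $(x,y)\mapsto(Cx,Cy)$ is an isometric bijection of $\mathcal{H}\oplus\mathcal{H}$ that carries the graph of $A$ onto the graph of $CAC$, so it also carries closures onto closures.

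There is no substantive obstacle in this argument. The only point requiring care is that the metric estimates pass through $C$ even though $C$ is anti-linear, and this is handled by the identity $C(x-y)=Cx-Cy$ combined with the isometry of $C$.
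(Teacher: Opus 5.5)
Your proof is correct and follows essentially the same route as the paper: you transport a sequence $\varphi_n\to0$, $CAC\varphi_n\to\psi$ through $C$, use the isometry and anti-linearity of $C$ to get $C\varphi_n\to0$ and $A(C\varphi_n)\to C\psi$, and conclude $\psi=0$ from closability of $A$ and injectivity of $C$. You also note, correctly, that the hypothesis $CD(A)\subset D(A)$ is not needed for closability itself, since $C\varphi_n\in D(A)$ holds by the definition of $D(CAC)$. The paper instead uses that hypothesis to identify $D(CAC)=D(A)$, so your remark is a small sharpening rather than a different method.
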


\begin{proof}
From the assumptions, $CD(A)=D(A)$ and $D(CAC)=D(A)$.
For any sequence $\{\psi_n\}\subset D(CAC)$ with $\psi_n\to 0$ and $CAC\psi_n\to \phi$ as $n\to +\infty$, we show $\phi=0$.
By $\psi_n\in D(CAC)=D(A)$, $C\psi_n\in CD(A)=D(A)$ for each $n$.
Since $C$ is a conjugation, $\|C\psi_n\|_{\mathcal{H}}=\|\psi_n\|_{\mathcal{H}}$, by which $\psi_n\to 0$ implies $0=\lim_{n\to \infty}\|\psi_n \|_{\mathcal{H}}=\lim_{n\to \infty}\|C\psi_n \|_{\mathcal{H}}$.
Thus, $\{C\psi_n\}\subset D(A)$ satisfies $C\psi_n\to 0$.
From $0=\lim_{n\to \infty}\|CAC\psi_n-\phi \|_{\mathcal{H}}=\lim_{n\to \infty}\|C(AC\psi_n-C\phi) \|_{\mathcal{H}}=\lim_{n\to \infty}\|A(C\psi_n)-C\phi \|_{\mathcal{H}}$, we have $A(C\psi_n)\to C\phi$.
Since $A$ is closable, therefore, $C\phi=0$, which leads to $\phi=0$ because $C$ is injective.
\end{proof}

\begin{lemma}
\label{lemma4.2}
Let $A:D(A)\to \mathcal{H}$ be an $\eta$-quasi Hermitian, satisfying $A^\dagger=\eta A \eta ^{-1}$, where $\eta:\mathcal{H}\to \mathcal{H}$ is a positive intertwining operator.
Let $C:\mathcal{H}\to \mathcal{H}$ be a conjugation satisfying $CD(A)\subset D(A)$ and $CD(A^\dagger)\subset D(A^\dagger)$.
Furthermore, assume that $\eta$ and $C$ are commutative, namely
\begin{eqnarray}
    C\eta-\eta C=0~~\text{on $\mathcal{H}$}.
    \label{eqn:commutative_for_C_eta}
\end{eqnarray}
Then, $CAC : D(CAC)\to \mathcal{H}$ is an $\eta$-quasi Hermitian and satisfies the relation
\begin{eqnarray}
    CA^\dagger C=(CAC)^\dagger=\eta C A C \eta^{-1}.
    \label{eqn:CAC_quasi_Hermitian}
\end{eqnarray}
\end{lemma}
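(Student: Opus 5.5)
The plan splits into two parts: first the identity $(CAC)^\dagger = CA^\dagger C$, with equality of domains, and then the quasi-Hermitian relation, which follows by conjugating $A^\dagger=\eta A\eta^{-1}$ with $C$ and using (\ref{eqn:commutative_for_C_eta}). As in Lemma \ref{lemma4.1}, the hypotheses give $CD(A)=D(A)$ and $CD(A^\dagger)=D(A^\dagger)$, since $C^2=I$. Hence $D(CAC)=D(A)$ and $D(CA^\dagger C)=D(A^\dagger)$.

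The basic tool is the conjugation identity
\[
\langle C\phi, C\psi\rangle_{\mathcal{H}}=\langle \psi,\phi\rangle_{\mathcal{H}},
\]
which I would obtain from the isometry of $C$ via the polarization identity, together with its anti-linearity.

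For the inclusion $CA^\dagger C\subset (CAC)^\dagger$, take $\phi\in D(A^\dagger)$ and $\psi\in D(A)$. Then
\[
\langle \phi, CAC\psi\rangle_{\mathcal{H}}=\langle AC\psi, C\phi\rangle_{\mathcal{H}}=\langle C\psi, A^\dagger C\phi\rangle_{\mathcal{H}}=\langle CA^\dagger C\phi, \psi\rangle_{\mathcal{H}},
\]
using $C\phi\in D(A^\dagger)$. So $\phi\in D((CAC)^\dagger)$ and $(CAC)^\dagger\phi=CA^\dagger C\phi$.

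For the reverse inclusion, take $\phi\in D((CAC)^\dagger)$ with $\chi=(CAC)^\dagger\phi$. For every $\psi'\in D(A)$, put $\psi=C\psi'\in D(A)$. Reading the same chain backwards gives $\langle C\phi, A\psi'\rangle_{\mathcal{H}}=\langle C\chi,\psi'\rangle_{\mathcal{H}}$, so $C\phi\in D(A^\dagger)$ and $A^\dagger C\phi=C\chi$. Hence $\phi=C(C\phi)\in CD(A^\dagger)=D(A^\dagger)$ and $\chi=CA^\dagger C\phi$. This establishes $(CAC)^\dagger=CA^\dagger C$, including equality of domains. Note that $CAC$ is densely defined because $D(A)$ is dense, so its adjoint makes sense.

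For the second equality, first derive $C\eta^{-1}=\eta^{-1}C$ from (\ref{eqn:commutative_for_C_eta}) by multiplying $C\eta=\eta C$ on both sides by $\eta^{-1}$. Then
\[
CA^\dagger C=C\eta A\eta^{-1}C=\eta CAC\eta^{-1},
\]
and the domains are handled as follows. $D(\eta CAC\eta^{-1})=\{\phi : \eta^{-1}\phi\in D(A)\}$, while $D(CA^\dagger C)=D(A^\dagger)=\eta D(A)$. Since $\eta$ and $C$ commute, $\eta^{-1}\phi\in D(A)$ is equivalent to $\phi\in\eta D(A)$, so the two domains agree. Combining this with the first part yields (\ref{eqn:CAC_quasi_Hermitian}), and hence $CAC$ is $\eta$-quasi Hermitian.

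The main obstacle is the domain bookkeeping in the reverse inclusion for the adjoint, together with care over the direction in which inner products are conjugated when $C$ is applied. The algebraic identity itself is straightforward.
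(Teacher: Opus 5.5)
Your proof is correct and follows essentially the same route as the paper. The same inner-product chain gives $CA^\dagger C\subset(CAC)^\dagger$, and the quasi-Hermitian relation comes from commuting $C$ past $\eta$; the paper does this pointwise via $\eta CAC\varphi=CA^\dagger C\eta\varphi$ for $\varphi\in D(A)$. Your direct argument for the reverse inclusion and your explicit check that $D(CA^\dagger C)=\eta D(A)=D(\eta CAC\eta^{-1})$ are more careful than the paper's one-line appeal to ``replacing $A$ with $CAC$''. One small correction: the equivalence $\eta^{-1}\phi\in D(A)\iff\phi\in\eta D(A)$ holds by definition of $\eta^{-1}$, not because $\eta$ and $C$ commute.
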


\begin{proof}
We first show $CA^\dagger C=(CAC)^\dagger$.
In the hypothesis, $CD(A)=D(A)=D(CAC)$ and $CD(A^\dagger)=D(A^\dagger)=D(CA^\dagger C)$ hold.
Let $\varphi\in D(A^\dagger)$ and $\psi\in D(A)$.
Then, $\langle \varphi, CAC\psi\rangle_\mathcal{H} = \langle AC\psi, C\varphi\rangle_\mathcal{H}=\langle C\psi, A^\dagger C\varphi\rangle_\mathcal{H}=\langle CA^\dagger C\varphi, \psi\rangle_\mathcal{H}$.
Hence, we obtain $CA^\dagger C\subset(CAC)^\dagger$.
Whereas, replacing $A$ with $CAC$ provides $D((CAC)^\dagger)\subset D(A^\dagger)$ by $C^2=I$.
Therefore, $(CAC)^\dagger\subset CA^\dagger C$.
Next, for any $\varphi\in D(A)$, we have $\eta CAC \varphi=C\eta AC\varphi=CA^\dagger \eta C\varphi=CA^\dagger C\eta \varphi$, which shows that $CAC$ is an $\eta$-quasi Hermitian.
\end{proof}

In the previous study\cite{Ohmori2024}, we confirmed that for any pair of closable $\eta_i$-quasi Hermitian operators $A_i$ in $\mathcal{H}_i$ where $\eta_i$ is a positive operator on $\mathcal{H}_i$ ($i=1,2$), the linear operator in the tensor product $\mathcal{H}_1\bar{\otimes}\mathcal{H}_2$, $A=\overline{A_1\otimes I+I\otimes A_2}$, is $\eta_1\otimes\eta_2$-quasi Hermitian operator and hence self-adjoint in $(\mathcal{H}_1\bar{\otimes}\mathcal{H}_2)_{\eta_1\otimes\eta_2}$.
Here $A$ is the closure of the operator $A_1\otimes I +I\otimes A_2$ in $\mathcal{H}_1 \bar{\otimes}\mathcal{H}_2$.
In the present case, for an $\eta$-quasi Hermitian Hamiltonian $H:D(H)\to \mathcal{H}$, the previous two lemmas show that $CHC:D(CHC)\to \mathcal{H}$ becomes a closable $\eta$-quasi Hermitian operator.
Note that $-CHC:D(CHC)\to \mathcal{H}$ is also a closable $\eta$-quasi Hermitian operator.
Therefore, when setting $A_1=H$, $A_2=-CHC$, $\eta_1=\eta_2=\eta$, and $\mathcal{H}_1=\mathcal{H}_2=\mathcal{H}$, the linear operator $A=\overline{H\otimes I+I\otimes (-CHC)}=\overline{H\otimes I-I\otimes CHC}$ is $\eta\otimes\eta$-quasi Hermitian operator in the tensor product $\mathcal{H}\bar{\otimes}\mathcal{H}$.
Consequently, we obtain the following result.
\begin{theorem}
\label{theorem4.1}
Let $H:D(H)\to \mathcal{H}$ be an $\eta$-quasi Hermitian Hamiltonian.
Then the linear operator $L$ of the form (\ref{eqn:L_composite_operator}) is the $\eta\otimes \eta$-quasi Hermitian in the tensor product $\mathcal{H}\bar{\otimes}\mathcal{H}$, having the symmetrical relation
\begin{eqnarray}
    L^\dagger = (\eta \otimes \eta) L (\eta\otimes \eta) ^{-1},
    \label{eqn:eta_quasi_Hermitian_L}
\end{eqnarray}
and hence $L$ is a self-adjoint in $(\mathcal{H} \bar{\otimes}\mathcal{H})_{\eta\otimes\eta}$.
\end{theorem}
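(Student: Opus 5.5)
The plan is to reduce Theorem \ref{theorem4.1} to the tensor-product result of the previous study \cite{Ohmori2024}, quoted just before the statement. That result says: for closable $\eta_i$-quasi Hermitian $A_i$ in $\mathcal{H}_i$ with positive $\eta_i$, the closure $\overline{A_1\otimes I+I\otimes A_2}$ is $\eta_1\otimes\eta_2$-quasi Hermitian and self-adjoint in $(\mathcal{H}_1\bar\otimes\mathcal{H}_2)_{\eta_1\otimes\eta_2}$. We apply it with $\mathcal{H}_1=\mathcal{H}_2=\mathcal{H}$, $\eta_1=\eta_2=\eta$, $A_1=H$ and $A_2=-CHC$. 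The proof then amounts to checking that these choices satisfy the hypotheses of that result. Throughout we keep the standing assumptions of Lemmas \ref{lemma4.1} and \ref{lemma4.2}: $CD(H)\subset D(H)$, $CD(H^\dagger)\subset D(H^\dagger)$, and $C\eta=\eta C$.

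First, $A_1=H$ is closable because it is densely defined and closable by assumption, and it is $\eta$-quasi Hermitian by (\ref{eqn:eta_quasi_Hermitian_H}). Next, Lemma \ref{lemma4.1} shows that $CHC$ is closable with $D(CHC)=D(H)$, and Lemma \ref{lemma4.2} gives $(CHC)^\dagger=CH^\dagger C=\eta(CHC)\eta^{-1}$. For $A_2=-CHC$ we use $(-B)^\dagger=-B^\dagger$ to get $(-CHC)^\dagger=-\eta CHC\eta^{-1}=\eta(-CHC)\eta^{-1}$. The intertwining domain condition $\eta D(CHC)=D((CHC)^\dagger)$ follows from $\eta D(H)=D(H^\dagger)$ together with $C\eta=\eta C$ and $CD(H^\dagger)=D(H^\dagger)$. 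Density of $D(CHC)$ is inherited from $D(H)$, since $C$ is a norm-preserving bijection.

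With the hypotheses verified, the quoted result yields that $\overline{H\otimes I+I\otimes(-CHC)}=\overline{H\otimes I-I\otimes CHC}=L$ satisfies $L^\dagger=(\eta\otimes\eta)L(\eta\otimes\eta)^{-1}$. Self-adjointness in $(\mathcal{H}\bar\otimes\mathcal{H})_{\eta\otimes\eta}$ is part of the same conclusion, so (\ref{eqn:eta_quasi_Hermitian_L}) follows. We also need $\eta\otimes\eta$ to be positive and invertible with $(\eta\otimes\eta)^{-1}=\eta^{-1}\otimes\eta^{-1}$. This is standard for bounded positive invertible $\eta$, and it is what makes $(\mathcal{H}\bar\otimes\mathcal{H})_{\eta\otimes\eta}$ well defined.

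The main obstacle is the domain bookkeeping. We must make sure the hypotheses needed for Lemma \ref{lemma4.2}, namely the $C$-invariance of $D(H)$ and $D(H^\dagger)$ and the commutativity $C\eta=\eta C$, are in force. These are not written into the statement of Theorem \ref{theorem4.1}, so the proof should state explicitly that they are assumed as in Lemma \ref{lemma4.2}. It should also confirm that the domain matching $\eta D(\cdot)=D(\cdot^\dagger)$ required by \cite{Ohmori2024} carries over to $-CHC$.
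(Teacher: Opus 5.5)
Your proposal follows the paper's own argument. The paper likewise applies the tensor-product result of its earlier study with $A_1=H$, $A_2=-CHC$, $\eta_1=\eta_2=\eta$, uses Lemmas \ref{lemma4.1} and \ref{lemma4.2} to see that $-CHC$ is a closable $\eta$-quasi Hermitian operator, and reads off $L=\overline{H\otimes I-I\otimes CHC}$ as $\eta\otimes\eta$-quasi Hermitian. Your explicit listing of the standing hypotheses ($CD(H)\subset D(H)$, $CD(H^\dagger)\subset D(H^\dagger)$, $C\eta=\eta C$) and your domain check $\eta D(CHC)=D((CHC)^\dagger)$ make explicit what the paper leaves implicit; that check in fact needs only $D(CHC)=D(H)$, $D(CH^\dagger C)=D(H^\dagger)$ and $\eta D(H)=D(H^\dagger)$, not the commutativity.
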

%


We now set the unitary operator $I_C :  (\mathcal{H} \bar{\otimes} \mathcal{H},\langle \cdot,\, \cdot\rangle_{\mathcal{H} \bar{\otimes} \mathcal{H}}) \to (\mathcal{L}(\mathcal{H}),\langle \cdot,\, \cdot\rangle_{\mathcal{L}})$ that satisfies (\ref{eqn:LandT_unitary_relation}).
By $I_C$, the unitary transformation of $\eta\otimes \eta$  is defined as 
\begin{eqnarray}
    \zeta = I_C (\eta \otimes \eta) I_C^{-1}.
    \label{eqn:zeta}
\end{eqnarray}
This operator $\zeta:(\mathcal{L}(\mathcal{H}),\langle \cdot,\, \cdot\rangle_{\mathcal{L}})\to (\mathcal{L}(\mathcal{H}),\langle \cdot,\, \cdot\rangle_{\mathcal{L}})$ is  positive invertible because $\eta\otimes \eta$ has these properties.
Furthermore, the following theorem concerning the quasi Hermitian property for the Liouville operator is shown.
\begin{theorem}
\label{theorem4.2}
Let $H$ be an $\eta$-quasi Hermitian Hamiltonian and let $L=\overline{H\otimes I-I\otimes CHC}$ given as (\ref{eqn:L_composite_operator}).
Define the Liouville operator for $H$ by
\begin{eqnarray}
    \mathcal{L}_H = I_C L I_C^{-1}:D(\mathcal{L}_H) \to \mathcal{L}(\mathcal{H}).
    \label{eqn:quasi_Liouvillian}
\end{eqnarray}
Then, $\mathcal{L}_H$ is a $\zeta$-quasi Hermitian operator, exhibiting the symmetric relation: 
\begin{eqnarray}
    \mathcal{L}_H^\dagger = \zeta \mathcal{L}_H\zeta^{-1}.
    \label{eqn:zeta_quasi_Liouvillian}
\end{eqnarray}
\end{theorem}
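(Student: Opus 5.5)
The plan is to transport the symmetric relation of Theorem \ref{theorem4.1} through the unitary operator $I_C$, using only that conjugation by a unitary is compatible with adjoints, products and inverses of (unbounded) operators. By Theorem \ref{theorem4.1}, $L^\dagger=(\eta\otimes\eta)L(\eta\otimes\eta)^{-1}$ holds in $\mathcal{H}\bar{\otimes}\mathcal{H}$, including the domain identity $D(L^\dagger)=(\eta\otimes\eta)D(L)$. The definitions (\ref{eqn:quasi_Liouvillian}) and (\ref{eqn:zeta}) express $\mathcal{L}_H$ and $\zeta$ as the images of $L$ and $\eta\otimes\eta$ under $X\mapsto I_C X I_C^{-1}$.

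First, I will establish the general fact that for a densely defined operator $X$ in $\mathcal{H}\bar{\otimes}\mathcal{H}$ and the unitary $I_C$, one has $(I_C X I_C^{-1})^\dagger=I_C X^\dagger I_C^{-1}$, with $D((I_CXI_C^{-1})^\dagger)=I_C D(X^\dagger)$. This is a direct computation: $B\in D((I_CXI_C^{-1})^\dagger)$ if and only if $A\mapsto\langle B, I_CXI_C^{-1}A\rangle_\mathcal{L}=\langle I_C^{-1}B, X I_C^{-1}A\rangle_{\mathcal{H}\bar{\otimes}\mathcal{H}}$ is bounded on $I_C D(X)$. Since $I_C^{-1}$ is isometric and maps $I_CD(X)$ onto $D(X)$, this is equivalent to $I_C^{-1}B\in D(X^\dagger)$, and then the adjoint acts as $I_CX^\dagger I_C^{-1}$. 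Density of $D(\mathcal{L}_H)=I_CD(L)$ follows because $I_C$ is a unitary onto $\mathcal{L}(\mathcal{H})$.

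Next, I will note the algebraic identities $I_C XY I_C^{-1}=(I_CXI_C^{-1})(I_CYI_C^{-1})$, including domains, and $\zeta^{-1}=I_C(\eta\otimes\eta)^{-1}I_C^{-1}$. These are immediate since $I_C^{-1}I_C=I$ on $\mathcal{H}\bar{\otimes}\mathcal{H}$. Combining them gives
\[
\mathcal{L}_H^\dagger=I_CL^\dagger I_C^{-1}=I_C(\eta\otimes\eta)L(\eta\otimes\eta)^{-1}I_C^{-1}=\zeta\,\mathcal{L}_H\,\zeta^{-1}.
\]
Tracking domains gives $D(\mathcal{L}_H^\dagger)=I_C(\eta\otimes\eta)D(L)=\zeta D(\mathcal{L}_H)$, which is the domain condition that accompanies the $\zeta$-quasi Hermitian relation. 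Positivity and invertibility of $\zeta$ were already noted after (\ref{eqn:zeta}), so $\mathcal{L}_H$ is $\zeta$-quasi Hermitian. As a consistency check, I will observe that $\mathcal{L}_H$ is closed, since $L$ is closed and unitary conjugation preserves closedness.

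The main obstacle is bookkeeping with domains in the unbounded setting, specifically confirming that the adjoint commutes with unitary conjugation as an equality of operators rather than a mere inclusion. This is the kind of subtlety that arose for tensor sums in the introduction. Here no such gap occurs, because $I_C$ is a bijective isometry, so the equivalence in the first step is exact in both directions. I would write that step carefully and treat the rest as routine.
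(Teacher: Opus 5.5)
Your proposal is correct and follows essentially the same route as the paper. Both arguments conjugate the relation $L^\dagger=(\eta\otimes\eta)L(\eta\otimes\eta)^{-1}$ from Theorem \ref{theorem4.1} by the unitary $I_C$ and check the domain identity $\zeta D(\mathcal{L}_H)=I_C(\eta\otimes\eta)D(L)=D(\mathcal{L}_H^\dagger)$. Your explicit verification that $(I_CXI_C^{-1})^\dagger=I_CX^\dagger I_C^{-1}$ holds with equal domains simply makes rigorous the step the paper writes tersely as $(I_C^{-1})^\dagger L^\dagger I_C^\dagger=I_CL^\dagger I_C^{-1}$.
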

\begin{proof}
Using (\ref{eqn:eta_quasi_Hermitian_L}), let us directly show the relation (\ref{eqn:zeta_quasi_Liouvillian}), as follows.
\begin{eqnarray}
    \mathcal{L}_H^\dagger & = & (I_C^{-1})^\dagger L^\dagger I_C^\dagger=I_CL^\dagger I_C^{-1}=I_C(\eta\otimes \eta) L (\eta\otimes \eta)^{-1}I_C^{-1}
    \nonumber
    \\
    & = & (I_C \eta\otimes \eta I_C^{-1}) \mathcal{L}_H (I_C\eta\otimes \eta I_C^{-1})^{-1}=\zeta \mathcal{L}_H \zeta^{-1}.
    \label{eqn:calculation_thm4.2}
\end{eqnarray}
In addition, we have $\zeta D(\mathcal{L}_H)=I_C(\eta \otimes \eta)D(L)=I_CD(L^\dagger)=D(\mathcal{L}_H^\dagger)$.
Therefore, $\mathcal{L}_H$ is a $\zeta$-quasi Hermitian operator.
\end{proof}
\noindent
Note that both $L$ and $\mathcal{L}_H$ are {\it not} self-adjoint on $\mathcal{H}\bar{\otimes}\mathcal{H}$ and $\mathcal{L}_H$, respectively, since  $H$ is not self-adjoint.

\subsection{$\zeta$-RHS}
\label{sec:4.3}

For a $\eta$-quasi Hermitian Hamiltonian $H$ given by (\ref{eqn:eta_quasi_Hermitian_H}), set the corresponding RHS, $\Phi\subset \mathcal{H}\subset \Phi^\prime,\Phi^\times$, and assume that $\eta$ is continuous on $(\Phi,\tau_{\Phi})$ and satisfies $\eta \Phi=\Phi$.
Then, it is shown that $\eta\otimes \eta$ is continuous on $(\Phi\hat{\otimes}\Phi,\tau_{\Phi\hat{\otimes}\Phi})$ and $\eta\otimes \eta(\Phi\hat{\otimes}\Phi)=\Phi\hat{\otimes}\Phi$\cite{Ohmori2024}.
Now we consider the RLS, ${\Phi}_\mathcal{L} \subset \mathcal{L}(\mathcal{H}) \subset 
    {\Phi}_\mathcal{L}^{\prime},{\Phi}_\mathcal{L}^{\times}$, given by (\ref{eqn:RLS}). 
Since $I_C$ is isomorphic from $(\Phi\hat{\otimes}\Phi,\tau_{\Phi\hat{\otimes}\Phi})$ onto $(\Phi_{\mathcal{L}},\tau_{\Phi_{\mathcal{L}}})$ and $\zeta$ is formed by (\ref{eqn:zeta}),  
$\zeta$ is continuous on $(\Phi_{\mathcal{L}},\tau_{\Phi_{\mathcal{L}}})$ and $\zeta \Phi_{\mathcal{L}}=\Phi_{\mathcal{L}}$ holds.
The properties indicate that we can apply the $\eta$-RHS formulation\cite{Ohmori2022} to the present RLS with respect to the positive invertible operator $\zeta$ on $\mathcal{L}(\mathcal{H})$.
Then, the following new RHS is obtained. 
\begin{eqnarray}
    {\Phi}_\mathcal{L} \subset (\mathcal{L}(\mathcal{H}))_\zeta \subset 
    {\Phi}_\mathcal{L}^{\prime},{\Phi}_\mathcal{L}^{\times},
    \label{eqn:zeta_RLS}
\end{eqnarray}
where $(\mathcal{L}(\mathcal{H}))_\zeta$ is the completion of the pre-Hilbert space $(\mathcal{L}(\mathcal{H}),\langle \cdot, \cdot \rangle_\zeta)$ composed of the set $\mathcal{L}(\mathcal{H})$ of the Liouville space and the inner product $\langle \cdot, \cdot \rangle_\zeta$ on $\mathcal{L}(\mathcal{H})$ given as 
 \begin{eqnarray}
    \langle A, B \rangle_\zeta
    = \langle A, \zeta B \rangle_\mathcal{L}~~(A,B\in \mathcal{L}(\mathcal{H})),
     \label{eqn:zeta_inner_product}
 \end{eqnarray}
and $\Phi_{\mathcal{L}}$, $\Phi_{\mathcal{L}}^\prime$, and $\Phi_{\mathcal{L}}^\times$ coincide with of (\ref{eqn:RLS}).
The RHS (\ref{eqn:zeta_RLS}) is called the $\zeta$-RLS.
Note that when introducing the $\eta\otimes \eta$-RHS derived from the RHS (\ref{eqn:tensor_pro_of_RHS}) with respect to the positive invertible operator $\eta\otimes \eta$, 
\begin{eqnarray}
    \Phi \hat{\otimes} {\Phi} \subset (\mathcal{H} \bar {\otimes} \mathcal{H})_{\eta \otimes \eta} \subset 
    (\Phi \hat{\otimes} {\Phi})^{\prime},~(\Phi \hat{\otimes} {\Phi})^{\times},
    \label{eqn:eta_eta_tensor_pro_of_RHS}
\end{eqnarray}
the $\zeta$-RLS (\ref{eqn:zeta_RLS}) is unitary equivalent to the $\eta\otimes \eta$-RHS (\ref{eqn:eta_eta_tensor_pro_of_RHS}).  

From Theorem \ref{theorem4.1}, $L$ is self-adjoint in $(\mathcal{H}\bar{\otimes}\mathcal{H})_{\eta\otimes \eta}$ and it can be verified that $L$ is continuous on $\Phi\hat{\otimes}\Phi$ with $L(\Phi\hat{\otimes}\Phi)\subset (\Phi\hat{\otimes}\Phi)$.
Therefore, the spectral expansions for $L$ can be obtained based on the $\eta \otimes\eta$-RHS (\ref{eqn:eta_eta_tensor_pro_of_RHS}).
%
%
%
The next subsection focuses on this construction.

\subsection{Spectral expansion of the quasi Hermitian $L$}
\label{sec:4.3}

By applying the general RHS formulation for an $\eta\otimes \eta$-quasi Hermitian operator developed in the previous study\cite{Ohmori2024} to the operator $L$, we can obtain the following results.
First of all, $L$ has the set of the generalized eigenvector, $\{\ket{\lambda_1}_\eta \otimes \ket{\lambda_2^C}_\eta \}$, each $\ket{\lambda_1}_\eta \otimes \ket{\lambda_2^C}_\eta$ satisfying the eigenequations,
\begin{eqnarray}
    \bra{\lambda_1}_\eta \otimes \bra{\lambda_2^C}_\eta \hat{L}=(\lambda_1-\lambda_2)\bra{\lambda_1}_\eta \otimes \bra{\lambda_2^C}_\eta,
\label{eqn:generalized_eigeneq_for_quasi_L_bra}
    \\
    \hat{L}\ket{\lambda_1}_\eta \otimes \ket{\lambda_2^C}_\eta =(\lambda_1-\lambda_2)\ket{\lambda_1}_\eta \otimes \ket{\lambda_2^C}_\eta,
\label{eqn:generalized_eigeneq_for_quasi_L_ket}
\end{eqnarray}
where $\ket{\lambda^C}_\eta=\bra{\lambda}_\eta \hat{C}$ and $\hat{L}$ is the extension given by (\ref{eqn:extension_of_L}).
Such $\hat{L}$ satisfies (\ref{eqn:L_extension_relation_to_dual_spaces}) and the following symmetric relation on $(\Phi \hat{\otimes} {\Phi})^{\prime}\cup (\Phi \hat{\otimes} {\Phi})^{\times}$, corresponding to (\ref{eqn:eta_quasi_Hermitian_L}):
    \begin{eqnarray}
    {\hat L^{\dagger}}
    =\widehat{\eta\otimes\eta}~
    \hat{L}~\widehat{\eta{\otimes}\eta}^{-1},
    \label{eqn:symmetric_relation_for_extensions_L}
    \end{eqnarray}
where $\hat{L}^\dagger$ and $\widehat{\eta\otimes\eta}$ are the extensions of the adjoint $L^\dagger$ of $L$ and $\eta \otimes \eta$, respectively; these extensions are given by replacing $L$ with $L^\dagger$ or $\eta \otimes \eta$ in (\ref{eqn:extension_of_L}). 
Note that using the representation (\ref{eqn:def_extension_dual}), $\hat{L}^\dagger$ can be characterized as
\begin{eqnarray}
    \hat{L}^\dagger=\overline{\hat{H}^\dagger\otimes\hat{I}-\hat{I}\otimes\hat{C}\hat{H}^\dagger\hat{C}},
    \label{eqn:L_dagger_extension_relation_to_dual_spaces}
\end{eqnarray}
on the dual spaces $(\Phi\hat{\otimes}\Phi)^\prime\cup (\Phi\hat{\otimes}\Phi)^\times$.
This relation is of central importance and we pause to
explain its significance in detail.

\textit{The inconsistency at the Hilbert space level.}
For a pair of non-Hermitian operators $A_i \neq A_i^\dagger$ ($i = 1, 2$)
acting on Hilbert spaces $\mathcal{H}_i$, the adjoint of their tensor product
sum satisfies only the inclusion relation\cite{Simon1980}
\begin{equation}
  (A_1^\dagger \otimes I_2 + I_1 \otimes A_2^\dagger)
  \;\subset\;
  (A_1 \otimes I_2 + I_1 \otimes A_2)^\dagger,
\end{equation}
and equality does \emph{not} hold in general. (The mismatch arises because
the domain of $(A_1 \otimes I_2 + I_1 \otimes A_2)^\dagger$ can be strictly
larger than $D(A_1^\dagger) \otimes D(A_2^\dagger)$.) 
Consequently, when the
Hamiltonian $H$ is non-Hermitian and the operator is defined as the form (\ref{eqn:L_composite_operator}), 
$L = \overline{H \otimes I - I \otimes CHC}$, its adjoint
$L^\dagger$ cannot in general be identified with
$\overline{H^\dagger \otimes I + I \otimes CH^\dagger C}$, namely, $L^\dagger\not =\overline{H^\dagger \otimes I + I \otimes CH^\dagger C}$. 
This creates a fundamental
inconsistency; one cannot define $L$ and $L^\dagger$ in a symmetric manner
that mirrors the relation $H^\dagger = \eta H \eta^{-1}$ of the Hamiltonian.

\textit{Resolution on the dual spaces.}
The relation~(\ref{eqn:L_dagger_extension_relation_to_dual_spaces}) demonstrates that this inconsistency is
\emph{resolved} by extending the operators to the dual spaces
$(\Phi\hat{\otimes}\Phi)' \cup (\Phi\hat{\otimes}\Phi)^\times$. On these
spaces, the extended adjoint $\hat{L}^\dagger$ is expressed as a proper
tensor product operator $\overline{\hat{H}^\dagger \otimes \hat{I} - \hat{I} \otimes
\hat{C}\hat{H}^\dagger\hat{C}}$, with \emph{equality} holding throughout
(rather than mere inclusion).
Note that this type of inconsistency and its resolution via dual spaces have already been discussed in previous studies on quasi-Hermitian composite systems, where related formulations can be found\cite{Ohmori2024}.

\textit{Why this matters.}
In the standard Hilbert space treatment for Hermitian systems, the
self-adjointness of $H$ ensures $L^\dagger = L$, so the inconsistency does not
arise. It is only in the non-Hermitian regime that the distinction between
``$\subset$'' and ``$=$'' becomes physically relevant. 
The RHS approach resolves
this by providing a natural domain on which the equality is restored. 
In this sense, we consider the extension to the dual spaces is not merely a mathematical convenience
but a \emph{necessity} for a consistent formulation of non-Hermitian
Liouvillian dynamics.

The spectral expansions of the bra and the ket for $L$ by $\{\ket{\lambda_1}_\eta \otimes \ket{\lambda_2^C}_\eta \}$ are represented, respectively, as follows.
\begin{eqnarray}
     \bra{\varphi}_{\mathcal{H}\bar{\otimes}\mathcal{H}}
     & = & 
    \displaystyle\int_{Sp(L)}
     \bra{\varphi}_{\mathcal{H}\bar{\otimes} \mathcal{H}}\widehat{\eta\otimes\eta}^{-1}
         \ket{\lambda_1}_{\eta}\otimes
     \ket{\lambda_2}_{\eta}
     \bra{\lambda_1}_{\eta}\otimes
     \bra{\lambda_2}_{\eta}     
    d\nu,
    \label{spectralexpansion_bra_quasi_L_a}
    \\
     \bra{\varphi}_{\mathcal{H}\bar{\otimes}\mathcal{H}}\hat{L}
     & = & 
    \displaystyle\int_{Sp(L)}
     \lambda
     \bra{\varphi}_{\mathcal{H}\bar{\otimes} \mathcal{H}}
         \ket{\lambda_1}_{\eta}\otimes
     \ket{\lambda_2}_{\eta}
     \bra{\lambda_1}_{\eta}\otimes
     \bra{\lambda_2}_{\eta}\widehat{\eta\otimes\eta}^{-1}
    d\nu,
    \label{spectralexpansion_bra_quasi_L_b}
    \\   
     \bra{\varphi}_{\mathcal{H}\bar{\otimes}\mathcal{H}}\hat{L}^\dagger
     & = & 
    \displaystyle\int_{Sp(L)}
     \lambda
     \bra{\varphi}_{\mathcal{H}\bar{\otimes} \mathcal{H}}
     \widehat{\eta\otimes\eta}^{-1}
         \ket{\lambda_1}_{\eta}\otimes
     \ket{\lambda_2}_{\eta}
     \bra{\lambda_1}_{\eta}\otimes
     \bra{\lambda_2}_{\eta}
    d\nu,
    \label{spectralexpansion_bra_quasi_L_c}
    \end{eqnarray}
and
\begin{eqnarray}
     \ket{\varphi}_{\mathcal{H}\bar{\otimes}\mathcal{H}}
     & = &
    \displaystyle\int_{Sp(L)}
     \bra{\lambda_1}_{\eta}\otimes
     \bra{\lambda_2}_{\eta}\widehat{\eta\otimes\eta}^{-1} \ket{\varphi}_{\mathcal{H}\bar{\otimes} \mathcal{H}}    
     \ket{\lambda_1}_{\eta}\otimes
     \ket{\lambda_2}_{\eta}
    d\nu.
    \label{spectralexpansion_ket_quasi_L_a}
    \\
     \hat{L}\ket{\varphi}_{\mathcal{H}\bar{\otimes}\mathcal{H}}
     & = & 
    \displaystyle\int_{Sp(L)}
     \lambda
     \bra{\lambda_1}_{\eta}\otimes
     \bra{\lambda_2}_{\eta}\ket{\varphi}_{\mathcal{H}\bar{\otimes} \mathcal{H}}    
     \widehat{\eta\otimes\eta}^{-1}
     \ket{\lambda_1}_{\eta}\otimes
     \ket{\lambda_2}_{\eta}
    d\nu,
    \label{spectralexpansion_ket_quasi_L_b}
    \\
     \hat{L}^\dagger\ket{ \varphi}_{\mathcal{H}\bar{\otimes}\mathcal{H}}
     & = & 
    \displaystyle\int_{Sp(L)}
     \lambda
     \bra{\lambda_1}_{\eta}\otimes
     \bra{\lambda_2}_{\eta}
     \widehat{\eta\otimes\eta}^{-1}
     \ket{\varphi}_{\mathcal{H}\bar{\otimes} \mathcal{H}}    
     \ket{\lambda_1}_{\eta}\otimes
     \ket{\lambda_2}_{\eta}
    d\nu.
    \label{spectralexpansion_ket_quasi_L_c}
        \end{eqnarray}
Furthermore, when introducing 
\begin{eqnarray}
    \bra{\lambda_1}_{\mathcal{H}}\otimes\bra{\lambda_2^C}_{\mathcal{H}}=\bra{\lambda_1}_{\eta}\otimes\bra{\lambda^C_2}_{\eta}\widehat{\eta\otimes\eta}^{-1}
\text{~and~} 
    \ket{\lambda_1}_{\mathcal{H}}\otimes\ket{\lambda_2^C}_{\mathcal{H}}=\widehat{\eta\otimes\eta}^{-1}\ket{\lambda_1}_{\eta}\otimes\ket{\lambda_2^C}_{\eta},
    \label{eqn:generalized_eigenvector_for_L}
\end{eqnarray}
the set $\big{\{}\ket{\lambda_1}_{\eta}\otimes\ket{\lambda_2^C}_{\eta}, \ket{\lambda_1}_{\mathcal{H}}\otimes\ket{\lambda_2^C}_{\mathcal{H}}\big{\}}_{\lambda_1-\lambda_2\in Sp(L)}$ constitutes
the complete bi-orthogonal base for the quasi-Hermitian composite system.

\subsection{Spectral expansion of the quasi Hermitian Liouville operator $\mathcal{L}_H$}
\label{sec:4.4}

Let $\hat{\zeta}$ and $\hat{\mathcal{L}}_H$ be the extensions to $\Phi_\mathcal{L}^\prime\cup \Phi_\mathcal{L}^\times$ given by replacing $L$ with  $\zeta$ and with $\mathcal{L}_H$ in (\ref{eqn:extension_of_L}), respectively, where $\zeta$ is the positive invertible operator given by (\ref{eqn:zeta}) and $\mathcal{L}_H$ is the $\zeta$-quasi Hermitian Liouville operator given as (\ref{eqn:quasi_Liouvillian}). 
Then, by applying the $\eta$-RHS framework\cite{Ohmori2022} to the $\zeta$-RLS (\ref{eqn:zeta_RLS}), the following symmetrical relation are obtained from the relation (\ref{eqn:zeta_quasi_Liouvillian}):
\begin{eqnarray}
    \hat{\mathcal{L}}_H^\dagger = \hat{\zeta} \hat{\mathcal{L}}_H\hat{\zeta}^{-1},
    \label{eqn:extension_zeta_quasi_Liouvillian}
\end{eqnarray}
where $\hat{\mathcal{L}}_H^\dagger$ is the extension of $\mathcal{L}_H^\dagger$ to $\Phi_\mathcal{L}^\prime\cup \Phi_\mathcal{L}^\times$ given by the same manner as of $\mathcal{L}_H$.
By the definition (\ref{eqn:quasi_Liouvillian}) of $\mathcal{L}_H$, the following relation between $\hat{\mathcal{L}}_H$ and $\hat{L}$ can be obtained: 
\begin{eqnarray}
\hat{\mathcal{L}}_H =\hat{I}_C^{-1}\hat{L}\hat{I}_C.
\label{eqn:relation_of_quasiHermitian_Liouville_operator_to_L}
\end{eqnarray}
Note that  (\ref{eqn:relation_of_quasiHermitian_Liouville_operator_to_L}) coincides with the relation (\ref{eqn:relation_L_and_Liouvillian}) for the Hermitian case. 
Also, since $I_C$ is the unitary operator,  $\hat{\mathcal{L}}_H^\dagger$ can be associated with $\hat{L}^\dagger$ as 
\begin{eqnarray}
\hat{\mathcal{L}}_H^\dagger =\hat{I}_C^{-1}\hat{L}^\dagger\hat{I}_C.   \label{eqn:relation_of_dagger_quasiHermitian_Liouville_operator_to_L}
\end{eqnarray}
In addition, from (\ref{eqn:zeta}), we have the relation between $\hat{\zeta}$ and $\widehat{\eta\otimes\eta}$, 
\begin{eqnarray}
\hat{\zeta}=\hat{I}_C\widehat{\eta\otimes\eta} \hat{I}_C^{-1}.
    \label{eqn:zeta_extension_relation_to_tensor_of_eta}
\end{eqnarray}

Using the generalized eigenvector $\ket{\lambda_1}_\eta \otimes \ket{\lambda_2^C}_\eta$ for $L$ given in the previous subsection, we now introduce
\begin{eqnarray}
\langle \! \langle{\lambda_1,\lambda_2}|_{\zeta} \equiv  (\bra{\lambda_1}_{\eta}\otimes\bra{\lambda_2^C}_{\eta})\hat{I}_C \text{~~and~~}
| \lambda_1,\lambda_2 \rangle \! \rangle _\zeta  \equiv  \hat{I}_C(\ket{\lambda_1}_{\eta}\otimes \ket{\lambda_2^C}_{\eta}).
\label{eqn:generalized_eigenvector_for_quasi_Liouvillian}
\end{eqnarray}
Then, they become the generalized eigenvectors for  $\mathcal{L}_H$ that satisfy the following eigenequations:
\begin{eqnarray}
    \langle \! \langle{\lambda_1,\lambda_2}|_\zeta\hat{\mathcal{L}}_H =(\lambda_1-\lambda_2)\langle \! \langle{\lambda_1,\lambda_2}|_\zeta,
    \label{eqn:generalized_eigenequations_for_quasi_Liouvillian_a}
    \\
    \hat{\mathcal{L}}_H| \lambda_1,\lambda_2 \rangle \! \rangle _\zeta=(\lambda_1-\lambda_2)| \lambda_1,\lambda_2 \rangle \! \rangle _\zeta.
    \label{eqn:generalized_eigenequations_for_quasi_Liouvillian_b}
\end{eqnarray}
Furthermore, we introduce
\begin{eqnarray}
    \langle \! \langle{\lambda_1,\lambda_2}|_{\mathcal{L}}=\langle \! \langle{\lambda_1,\lambda_2}|_{\zeta}\hat{\zeta}^{-1}
\text{~and~} 
    | \lambda_1,\lambda_2 \rangle \! \rangle _\mathcal{L}=\hat{\zeta}^{-1}| \lambda_1,\lambda_2 \rangle \! \rangle _\zeta.
    \label{eqn:generalized_eigenvector_for_Liouvillian}
\end{eqnarray}
By using  (\ref{eqn:generalized_eigenvector_for_L}), (\ref{eqn:zeta_extension_relation_to_tensor_of_eta}), and (\ref{eqn:generalized_eigenvector_for_quasi_Liouvillian}),  $\langle \! \langle{\lambda_1,\lambda_2}|_{\mathcal{L}}$ and $|{\lambda_1,\lambda_2}\rangle \! \rangle_{\mathcal{L}}$ can be associated with $\bra{\lambda_1}_{\mathcal{H}}\otimes\bra{\lambda_2^C}_{\mathcal{H}}$ and $\ket{\lambda_1}_{\mathcal{H}}\otimes\ket{\lambda_2^C}_{\mathcal{H}}$ as
\begin{eqnarray}
\langle \! \langle{\lambda_1,\lambda_2}|_{\mathcal{L}} =  (\bra{\lambda_1}_{\mathcal{H}}\otimes\bra{\lambda_2^C}_{\mathcal{H}})\hat{I}_C 
\text{~~and~~}
| \lambda_1,\lambda_2 \rangle \! \rangle _\mathcal{L}  =  \hat{I}_C(\ket{\lambda_1}_{\mathcal{H}}\otimes \ket{\lambda_2^C}_{\mathcal{H}}),
\label{eqn:generalized_eigenvector_for_quasi_Liouvillian_new}
\end{eqnarray}
respectively.
Since $\big{\{}\ket{\lambda_1}_{\eta}\otimes\ket{\lambda_2^C}_{\eta}, \ket{\lambda_1}_{\mathcal{H}}\otimes\ket{\lambda_2^C}_{\mathcal{H}}\big{\}}_{\lambda_1-\lambda_2\in Sp(L)}$ is the complete bi-orthogonal base, from (\ref{eqn:generalized_eigenvector_for_quasi_Liouvillian}) and (\ref{eqn:generalized_eigenvector_for_quasi_Liouvillian_new}), we have
\begin{eqnarray}
 \langle A, B \rangle_\mathcal{L} & = &   \langle I_C^{-1}A, I_C^{-1}B \rangle_{\mathcal{H}\bar{\otimes} \mathcal{H}}
        \nonumber\\
        & = &  \displaystyle\int_{Sp(L)}
     \bra{I_C^{-1}A}_{\mathcal{H}\bar{\otimes} \mathcal{H}}
         \ket{\lambda_1}_{\eta}\otimes
     \ket{\lambda_2}_{\eta}
     \bra{\lambda_1}_{\mathcal{H}}\otimes
     \bra{\lambda_2}_{\mathcal{H}}\ket{I_C^{-1}B}_{\mathcal{H}\bar{\otimes}\mathcal{H}}     
    d\nu
    \nonumber\\
        & = &  \displaystyle\int_{Sp(\mathcal{L}_H)}
     \bra{A}_{\mathcal{L}}\hat{I}_C
         (\ket{\lambda_1}_{\eta}\otimes
     \ket{\lambda_2}_{\eta})(
     \bra{\lambda_1}_{\mathcal{H}}\otimes
     \bra{\lambda_2}_{\mathcal{H}})\hat{I}_C\ket{B}_{\mathcal{L}}     
    d\nu
    \nonumber \\
   & = &  \displaystyle\int_{Sp(\mathcal{L}_H)}
     \bra{A}_{\mathcal{L}}| \lambda_1,\lambda_2 \rangle \! \rangle _\zeta
       \langle \! \langle \lambda_1,\lambda_2 |_\mathcal{L}\ket{B}_{\mathcal{L}}     
    d\nu,
    \label{eqn:bi_o.n.b.calculation}
\end{eqnarray}
for any $A,B\in \Phi_{\mathcal{L}}$.
Therefore, we obtain the completion relation, 
\begin{eqnarray}
    I =  \displaystyle\int_{Sp(\mathcal{L}_H)}
     | \lambda_1,\lambda_2 \rangle \! \rangle _\zeta
       \langle \! \langle \lambda_1,\lambda_2 |_\mathcal{L}d\nu
    =
    \displaystyle\int_{Sp(\mathcal{L}_H)}| \lambda_1,\lambda_2 \rangle \! \rangle _\mathcal{L}
       \langle \! \langle \lambda_1,\lambda_2 |_\zeta
    d\nu.
    \label{eqn:completion}
\end{eqnarray}
Similarly, we can show the bi-orthogonal relation  
\begin{eqnarray}
\langle \! \langle \lambda_1^\prime,\lambda_2^\prime |_\zeta|\lambda_1,\lambda_2 \rangle \! \rangle _\mathcal{L}   
=
\langle \! \langle \lambda_1^\prime,\lambda_2^\prime |_\mathcal{L}|\lambda_1,\lambda_2 \rangle \! \rangle _\zeta   
    =
    \Check{\delta}(\lambda_1^\prime-\lambda_1)
    \Check{\delta}(\lambda_2^\prime-\lambda_2).
    \label{eqn:bi-o.g.r}
\end{eqnarray}
Thus, from (\ref{eqn:completion}) and (\ref{eqn:bi-o.g.r}), the set $\big{\{}|\lambda_1,\lambda_2 \rangle \! \rangle _\zeta,|\lambda_1,\lambda_2 \rangle \! \rangle _\mathcal{L}\big{\}}_{\lambda_1-\lambda_2\in Sp(\mathcal{L})}$ consists of the complete bi-orthogonal base.
Corresponding to the spectral expansions (\ref{spectralexpansion_bra_quasi_L_a})-(\ref{spectralexpansion_ket_quasi_L_c}) for $L$, the spectral expansions for the $\zeta$-quasi Hermitian Liouville operator $\mathcal{L}_H$ can be described as follows: for $A\in \Phi_\mathcal{L}$,
\begin{eqnarray}
     \bra{A}_{\mathcal{L}}
     & = & 
    \displaystyle\int_{Sp(\mathcal{L}_H)}
     \bra{A}_{\mathcal{L}}\hat{\zeta}^{-1}|
     \lambda_1,\lambda_2 \rangle \! \rangle _\zeta \langle \! \langle{\lambda_1,\lambda_2}| _\zeta
     d\nu,
    \label{spectralexpansion_bra_quasi_Liouvillian_a}
    \\
     \bra{A}_{\mathcal{L}}\hat{\mathcal{L}}_H
     & = & 
    \displaystyle\int_{Sp(\mathcal{L}_H)}
     \lambda \bra{A}_{\mathcal{L}}|
     \lambda_1,\lambda_2 \rangle \! \rangle _\zeta \langle \! \langle{\lambda_1,\lambda_2}| _\zeta \hat{\zeta}^{-1}
     d\nu,
    \label{spectralexpansion_bra_quasi_Liouvillian_b}
    \\   
     \bra{A}_{\mathcal{L}}\hat{\mathcal{L}}_H^\dagger
     & = & 
    \displaystyle\int_{Sp(\mathcal{L}_H)}
     \lambda \bra{A}_{\mathcal{L}}\hat{\zeta}^{-1}|
     \lambda_1,\lambda_2 \rangle \! \rangle _\zeta \langle \! \langle{\lambda_1,\lambda_2}| _\zeta 
     d\nu,
    \label{spectralexpansion_bra_quasi_L_c}
    \end{eqnarray}
and
\begin{eqnarray}
     \ket{A}_{\mathcal{L}}
     & = &
    \displaystyle\int_{Sp(\mathcal{L}_H)}
     \langle \! \langle{\lambda_1,\lambda_2}| _\zeta\hat{\zeta}^{-1} \ket{A}_{\mathcal{L}}|
     \lambda_1,\lambda_2 \rangle \! \rangle _\zeta 
    d\nu,
    \label{spectralexpansion_ket_quasi_Liouvillian_a}
    \\
     \hat{\mathcal{L}}_H\ket{A}_{\mathcal{L}}
     & = &
    \displaystyle\int_{Sp(\mathcal{L}_H)}
     \lambda \langle \! \langle{\lambda_1,\lambda_2}| _\zeta \ket{A}_{\mathcal{L}}\hat{\zeta}^{-1}|
     \lambda_1,\lambda_2 \rangle \! \rangle _\zeta 
    d\nu,
    \label{spectralexpansion_ket_quasi_Liouvillian_b}
    \\
    \hat{\mathcal{L}}_H^\dagger \ket{A}_{\mathcal{L}}
     & = &
    \displaystyle\int_{Sp(\mathcal{L}_H)}
     \lambda \langle \! \langle{\lambda_1,\lambda_2}| _\zeta \hat{\zeta}^{-1}\ket{A}_{\mathcal{L}}|
     \lambda_1,\lambda_2 \rangle \! \rangle _\zeta 
    d\nu.
    \label{spectralexpansion_ket_quasi_Liouvillian_c}
        \end{eqnarray}

\subsection{Application to non-Hermitian Harmonic Oscillator}
\label{sec:4.5}

The final topic of the present study is the application of our formulation to non-Hermitian Liouvillian system.
We now focus on the following simple non-Hermitian 
$\mathcal{PT}$-symmetric oscillator Hamiltonian proposed by Swanson\cite{Swanson2004},
\begin{eqnarray}
    H=\omega \left(a^{\dagger}a+\frac{1}{2}\right ) +\alpha a^2+\beta a^{\dagger,2},
    \label{eqn:Swanson}
\end{eqnarray}
where $\omega,\,\alpha,\,\beta \in \mathbb{R}$, with $\alpha\neq\beta$.
$a$ and $a^\dagger$ are ladder operators that are represented as
\begin{equation}
     \label{eqn:a_to_xp}
     a =
    \frac{1}{\sqrt{2}}\left( \frac{x}{\ell} + i\frac{\ell}{\hbar} p \right), \quad
    \hat a^\dagger = \frac{1}{\sqrt{2}}\left( \frac{x}{\ell} - i\frac{\ell}{\hbar} p \right),
\end{equation}
where $\ell=\sqrt{\hbar/m\omega}$ is the typical length for the ordinary harmonic oscillator with the frequency $\omega$ and mass $m$.
Here we assume $\omega^2-4\alpha\beta>0$.
Then, the following similarity transformation can be performed to derive an ordinary hermitian oscillator Hamiltonian\cite{Quesne2007}: 
\begin{eqnarray}
h_\rho & = &  \rho H\rho^{-1}\nonumber\\
        & = & \frac{1}{2M(z)}  p^2 + \frac{1}{2} M(z)\Omega^2 {x}^2,   
    \label{eqn:5_similarity_tra}
\end{eqnarray}
where
$z:=2\kappa/\chi$ with $z\in[-1, 1]$
is the free parameter that determines the Hermiticity of $h_\rho$,
$\Omega^2 \, = \,\omega^2 - 4\alpha\beta>0$, 
\begin{equation}
    M^{-1}(z) = \frac{-z(\alpha+\beta)+\omega-(\alpha+\beta-z\omega)\sqrt{1-\frac{(1-z^2)(\alpha-\beta)^2}{(\alpha+\beta-z\omega)^2}}}{(1+z)\hbar \ell^{-2}},    
\end{equation}
and 
\begin{eqnarray}
        \eta(x) = \rho^2(x)
=\exp \left(
        -\frac{\alpha-\beta}{\omega-\alpha-\beta} \frac{{x}^2}{\ell^2}
        \right).
\label{eqn:eta}
\end{eqnarray}
Note that since $H$ connects to the Hermitian Hamiltonian $h_\rho$ by the similarity transformation with respect to $\rho$, $H$ becomes an $\eta$-quasi Hermitian Hamiltonian that satisfies (\ref{eqn:eta_quasi_Hermitian_H}), $H^\dagger=\eta H\eta^{-1}$.

We now set the RHS; since $h_\rho$ has the same form as (\ref{eqn:Harmonic_osc_Hamiltonian}), the RHS for $h_{\rho}$ can be provided by (\ref{eqn:RHS_for_Harmonic_oscillator}), $\Phi_{ho}\subset\mathcal{H}_{ho}\subset\Phi_{ho}^\prime,\Phi_{ho}^\times$ and its generalized eigenvectors are described as $\ket{n}\in \Phi_{ho}^\times$~($n=0,1,2,\cdots$).
Furthermore, for $H$, the generalized eigenvectors can be obtained as $\ket{n}_{\eta}\equiv\hat{\rho}^{-1}\ket{n}$.
Considering the conjugation $C_{ho}$ given in Sec. \ref{sec:3.4}, by Theorem \ref{theorem4.1}, the operator $L$ of the form (\ref{eqn:L_composite_operator}), $L=\overline{H\otimes I -I\otimes C_{ho}HC_{ho}}=\overline{H\otimes I -I\otimes H}$, becomes the $\eta\otimes \eta$-quasi Hermitian operator.
The spectral expansions for $L$ is then expressed using (\ref{spectralexpansion_bra_quasi_L_a})-(\ref{spectralexpansion_ket_quasi_L_c}), as follows: for $\varphi\in \Phi_{ho}\hat{\otimes}\Phi_{ho}$,
\begin{eqnarray}
     \bra{\varphi}_{\mathcal{H}_{ho}\bar{\otimes}\mathcal{H}_{ho}}
     & = & 
    \displaystyle\sum_{m,n}
     \bra{\varphi}_{\mathcal{H}_{ho}\bar{\otimes} \mathcal{H}_{ho}}\widehat{\eta\otimes\eta}^{-1}
         \ket{m}_{\eta}\otimes
     \ket{n}_{\eta}
     \bra{m}_{\eta}\otimes
     \bra{n}_{\eta},
    \label{eqn:spectralexpansion_bra_Swanson_L_a}
    \\
     \bra{\varphi}_{\mathcal{H}_{ho}\bar{\otimes}\mathcal{H}_{ho}}\hat{L}
     & = & 
    \displaystyle\sum_{m,n}\hbar \omega (m-n)
     \bra{\varphi}_{\mathcal{H}_{ho}\bar{\otimes} \mathcal{H}_{ho}}\widehat{\eta\otimes\eta}^{-1}
         \ket{m}_{\eta}\otimes
     \ket{n}_{\eta}
     \bra{m}_{\eta}\otimes
     \bra{n}_{\eta},
    \label{eqn:spectralexpansion_bra_Swanson_L_b}
    \\   
 \bra{\varphi}_{\mathcal{H}_{ho}\bar{\otimes}\mathcal{H}_{ho}}\hat{L}^\dagger
     & = & 
    \displaystyle\sum_{m,n}\hbar \omega (m-n)
     \bra{\varphi}_{\mathcal{H}_{ho}\bar{\otimes} \mathcal{H}_{ho}}\widehat{\eta\otimes\eta}^{-1}
         \ket{m}_{\eta}\otimes
     \ket{n}_{\eta}
     \bra{m}_{\eta}\otimes
     \bra{n}_{\eta},
    \label{eqn:spectralexpansion_bra_Swanson_L_c}
    \end{eqnarray}
and
\begin{eqnarray}
     \ket{\varphi}_{\mathcal{H}_{ho}\bar{\otimes}\mathcal{H}_{ho}}
     & = &
    \displaystyle\sum_{m,n}
     \bra{m}_{\eta}\otimes
     \bra{n}_{\eta}\widehat{\eta\otimes\eta}^{-1} \ket{\varphi}_{\mathcal{H}_{ho}\bar{\otimes} \mathcal{H}_{ho}}    
     \ket{m}_{\eta}\otimes
     \ket{n}_{\eta},
    \label{spectralexpansion_ket_Swanson_L_a}
    \\
          \hat{L}\ket{\varphi}_{\mathcal{H}_{ho}\bar{\otimes}\mathcal{H}_{ho}}
     & = &
    \displaystyle\sum_{m,n}\hbar \omega (m-n)
     \bra{m}_{\eta}\otimes
     \bra{n}_{\eta}\widehat{\eta\otimes\eta}^{-1} \ket{\varphi}_{\mathcal{H}_{ho}\bar{\otimes} \mathcal{H}_{ho}}    
     \ket{m}_{\eta}\otimes
     \ket{n}_{\eta},
    \label{spectralexpansion_ket_Swanson_L_b}
    \\
          \hat{L}^\dagger\ket{\varphi}_{\mathcal{H}_{ho}\bar{\otimes}\mathcal{H}_{ho}}
     & = &
    \displaystyle\sum_{m,n}\hbar \omega (m-n)
     \bra{m}_{\eta}\otimes
     \bra{n}_{\eta}\widehat{\eta\otimes\eta}^{-1} \ket{\varphi}_{\mathcal{H}_{ho}\bar{\otimes} \mathcal{H}_{ho}}    
     \ket{m}_{\eta}\otimes
     \ket{n}_{\eta}.
    \label{spectralexpansion_ket_Swanson_L_c}
        \end{eqnarray}
Here, $\hat{L}$ and $\hat{L}^\dagger$ are the extension of $L$ and $L^\dagger$ to $(\Phi_{ho}\hat{\otimes}\Phi_{ho})^\prime\cup (\Phi_{ho}\hat{\otimes}\Phi_{ho})^\times$ that are expressed as the following symmetric forms by (\ref{eqn:L_extension_relation_to_dual_spaces}) and (\ref{eqn:L_dagger_extension_relation_to_dual_spaces}): 
 \begin{eqnarray} 
\hat{L}=\overline{\hat{H}\otimes I-I\otimes \hat{H}} \text{~~and~~} \hat{L}^\dagger=\overline{\hat{H}^\dagger\otimes I-I\otimes \hat{H}^\dagger},
\label{eqn:Swanson_L_daggerL}
 \end{eqnarray}
where $\hat{H}$ and $\hat{H}^\dagger$ are the extension of $H$ given as (\ref{eqn:Swanson}) and its adjoint $H^\dagger$ to $\Phi_{ho}^\prime\cup \Phi_{ho}^\times$, respectively.
Note that $\hat{L}$ and $\hat{L}^\dagger$ are connected symmetrically by (\ref{eqn:symmetric_relation_for_extensions_L}), ${\hat L^{\dagger}}
    =\widehat{\eta\otimes\eta}~
    \hat{L}~\widehat{\eta{\otimes}\eta}^{-1}$.

When introducing $\zeta$ by (\ref{eqn:zeta}), from Theorem \ref{theorem4.2}, the Liouville operator $\mathcal{L}_H$ corresponding to the present non-Hermitian Hamiltonian $H$ given as (\ref{eqn:Swanson}) becomes the $\zeta$-quasi Hermitian operator.
Then, the extensions $\hat{\zeta}$, $\hat{\mathcal{L}}_H$, and $\hat{\mathcal{L}}_H^\dagger$ satisfy the relations (\ref{eqn:extension_zeta_quasi_Liouvillian})-(\ref{eqn:zeta_extension_relation_to_tensor_of_eta}).
By  (\ref{eqn:generalized_eigenvector_for_quasi_Liouvillian}), 
the generalized eigenvectors for $\mathcal{L}_H$ can be constructed using the eigenvector $\ket{m}_\eta\otimes\ket{n}_\eta$ for $L$ and $\hat{I}_{C_{ho}}$, as follows:
\begin{eqnarray}
\langle \! \langle{m,n}|_{\zeta} \equiv  (\bra{m}_{\eta}\otimes\bra{n}_{\eta})\hat{I}_{C_{ho}} \text{~~and~~}
| m,n \rangle \! \rangle _\zeta  \equiv  \hat{I}_{C_{ho}}(\ket{m}_{\eta}\otimes \ket{n}_{\eta}).
\label{eqn:generalized_eigenvector_for_Swanson_Liouvillian}
\end{eqnarray}
Then, they provide the eigenequations,
\begin{eqnarray}
    \langle \! \langle{m,n}|_\zeta\hat{\mathcal{L}}_H =\hbar\omega(m-n)\langle \! \langle{m,n}|_\zeta,
    \label{eqn:generalized_eigenequations_for_Swanson_Liouvillian_a}
    \\
    \hat{\mathcal{L}}_H| m,n \rangle \! \rangle _\zeta=\hbar\omega(m-n)| m,n \rangle \! \rangle _\zeta.
    \label{eqn:generalized_eigenequations_for_Swanson_Liouvillian_b}
\end{eqnarray}
Therefore, the spectral expansions of the super bra and ket for the $\zeta$-quasi Hermitian Liouville operator $\mathcal{L}_H$ with respect to $H$ are represented as follows:
for $A\in \Phi_\mathcal{L}$,
\begin{eqnarray}
     \bra{A}_{\mathcal{L}}
     & = & 
    \displaystyle\sum_{m,n}
     \bra{A}_{\mathcal{L}}\hat{\zeta}^{-1}|
     m,n \rangle \! \rangle _\zeta \langle \! \langle{m,n}| _\zeta,
    \label{spectralexpansion_bra_Swanson_Liouvillian_a}
    \\
     \bra{A}_{\mathcal{L}}\hat{\mathcal{L}}_H
     & = & 
    \displaystyle\sum_{m,n}
     \hbar\omega(m-n) \bra{A}_{\mathcal{L}}|
     m,n \rangle \! \rangle _\zeta \langle \! \langle{m,n}| _\zeta \hat{\zeta}^{-1},
    \label{spectralexpansion_bra_Swanson_Liouvillian_b}
    \\   
     \bra{A}_{\mathcal{L}}\hat{\mathcal{L}}_H^\dagger
     & = & 
    \displaystyle\sum_{m,n}
     \hbar\omega(m-n) \bra{A}_{\mathcal{L}}\hat{\zeta}^{-1}|
     m,n \rangle \! \rangle _\zeta \langle \! \langle{m,n}| _\zeta,
    \label{spectralexpansion_bra_Swanson_L_c}
    \end{eqnarray}
and
\begin{eqnarray}
     \ket{A}_{\mathcal{L}}
     & = &
    \displaystyle\sum_{m,n}
     \langle \! \langle{m,n}| _\zeta\hat{\zeta}^{-1} \ket{A}_{\mathcal{L}}|
     m,n \rangle \! \rangle _\zeta,
    \label{spectralexpansion_ket_Swanson_Liouvillian_a}
    \\
     \hat{\mathcal{L}}_H\ket{A}_{\mathcal{L}}
     & = &
    \displaystyle\sum_{m,n}
     \hbar\omega(m-n) \! \langle{m,n}| _\zeta \ket{A}_{\mathcal{L}}\hat{\zeta}^{-1}|
     m,n \rangle \! \rangle _\zeta,
    \label{spectralexpansion_ket_Swanson_Liouvillian_b}
    \\
    \hat{\mathcal{L}}_H^\dagger \ket{A}_{\mathcal{L}}
     & = &
    \displaystyle\sum_{m,n}
     \hbar\omega(m-n)\langle \! \langle{m,n}| _\zeta \hat{\zeta}^{-1}\ket{A}_{\mathcal{L}}|
     m,n \rangle \! \rangle _\zeta.
    \label{spectralexpansion_ket_Swanson_Liouvillian_c}
        \end{eqnarray}
Note that it is easy to transform the obtained results to other representations.

In comparing the obtained spectral
expansions~(\ref{spectralexpansion_bra_Swanson_Liouvillian_a})-(\ref{spectralexpansion_ket_Swanson_Liouvillian_c}) with
those of the Hermitian
case~(\ref{eqn:HO_expansions_of_bra_by_Liouvillian_a})-(\ref{eqn:HO_expansions_of_ket_by_Liouvillian_b}), the following two essential
differences are found.
First, in the quasi-Hermitian case, the spectral expansions require the
insertion of the inverse metric operator $\hat{\zeta}^{-1}$, reflecting the
positive-definite metric of the Liouville space induced from the inner product
$\langle A, B \rangle_\zeta = \langle A, \zeta B \rangle_\mathcal{L}$. 
In the
Hermitian limit ($\eta \to I$, hence $\zeta \to I$), the metric operator
reduces to the identity, and hence the expansions recover the Hermitian forms.
Second, the set of generalized eigenvectors for the quasi-Hermitian Liouville operator
forms a complete bi-orthogonal system, (\ref{eqn:completion}) and (\ref{eqn:bi-o.g.r}), rather than the single orthonormal base of the Hermitian case.
This bi-orthogonality is a direct consequence of the
fact that $\mathcal{L}_H$ and $\mathcal{L}_H^\dagger$ share the same eigenvalues but possess
distinct generalized eigenvectors, $|
     m,n \rangle \! \rangle _\zeta$ and $\zeta|
     m,n \rangle \! \rangle _\zeta$, respectively. 
Note that the eigenvalues $\hbar\omega(m - n)$ remain real and coincide with those
of the Hermitian case, since the Swanson Hamiltonian is quasi-Hermitian with a
real spectrum.

Although the RLS formulation discussed in this study focuses on quasi-Hermitian Liouville operators, this framework can be widely used for general superoperators on Liouville space.
Our present study contains various applications
since the description of quantum systems by the non-Hermitian Liouville operator and Liouville space has been extensively developed especially in the modern quantum theory, such as open quantum systems\cite{Gyamfi2020,Polonyi2021,Sukharnikov2023}.
And extending the present RLS formalism to other classes of non-Hermitian operators, such as pseudo-Hermitian operators with complex eigenvalues, constitutes an attractive and important direction for future investigation.

\section{Conclusion}
\label{sec:6}

A mathematical description of Dirac's bra-ket formalism for Liouville space, namely, the super bra-ket formalism, with respect to quasi-Hermitian systems is investigated based on RHS approach. 
The RLS that is the RHS suitable for Liouville space is reconstructed by making use of the mathematical unitarily transformation between the Liouville space and the tensor product of RHS.
The obtained RLS provides a rigorous mathematical framework for description of the super bra-ket vectors as well as the spectral expansions for both Hermitian and quasi-Hermitian Liouvillian operators. 
They are formulated and executed in the dual and anti-dual spaces.
Based on the formulation under these dual spaces, we show that the non-Hermitian Liouville operator and its adjoint can be symmetrically constructed while retaining their symmetric structure.
As the example of our present formulation, the Hermitian and non-Hermitian harmonic oscillators are focused on to clarify the essential differences of the spectral expansions between Hermitian and non-Hermitian models. 
%




\bigskip

\noindent
{\bf Acknowledgement}

The authors are grateful to
Prof. Y.~Yamazaki, Prof. T.~Yamamoto, Prof. Emeritus Y.~Yamanaka, and Prof. Emeritus A.~Kitada for worthwhile comments and encouragement.
We also appreciate 
Prof. H.~Ujino, Prof. I.~Sasaki, 
Prof. H.~Saigo, Prof. F.~Hiroshima, 
Prof. S. Matsutani for comments and encouragement.
%
This work was supported by the JSPS KAKENHI Grant Numbers 26K06948.
%

\noindent
{\bf Data Availability}

Data sharing is not applicable to this article as no new data were created or analyzed in this study.

\bigskip


\end{document}